\documentclass[11pt,a4paper]{article}
\usepackage[margin=1in]{geometry}
\usepackage{amsmath,amssymb,amsthm,mathtools}
\usepackage{graphicx,booktabs,array}
\usepackage{xcolor}
\usepackage[colorlinks=true,linkcolor=blue!60!black,citecolor=blue!60!black,urlcolor=blue!60!black]{hyperref}
\usepackage{caption}
\usepackage{placeins}
\newtheorem{theorem}{Theorem}
\newtheorem{proposition}[theorem]{Proposition}
\newtheorem{lemma}[theorem]{Lemma}
\newtheorem{corollary}[theorem]{Corollary}
\theoremstyle{definition}
\newtheorem{definition}[theorem]{Definition}
\newtheorem{remark}[theorem]{Remark}

\newcommand{\F}{\mathbb{F}}
\newcommand{\FG}{\F_2[G]}
\newcommand{\supp}{\operatorname{supp}}
\newcommand{\rowsp}{\operatorname{rowsp}}
\newcommand{\reg}{\operatorname{reg}}
\newcommand{\ZZ}{\mathbb{Z}}
\newcommand{\Lm}{\mathsf{L}}
\newcommand{\Rm}{\mathsf{R}}
\newcommand{\Cz}{C_G}
\newcommand{\Code}{\mathcal{C}}
\newcommand{\Aut}{\operatorname{Aut}}

\newcommand{\Out}{\operatorname{Out}}
\newcommand{\Hol}{\operatorname{Hol}}
\newcommand{\Tw}{\operatorname{Tw}}
\newcommand{\Tr}{\operatorname{Tr}}
\newcommand{\Aba}{\Aut_{\mathrm{ba}}}
\DeclareMathOperator{\rank}{rank}
\DeclareMathOperator{\Frob}{Frob}

\title{Lifted surgery for non-Abelian two-block group-algebra codes}
\author{\href{https://orcid.org/0000-0001-7448-5723}{Tushar Pandey}}
\date{}

\begin{document}
\maketitle

\begin{abstract}
Code surgery measures a logical operator of a quantum LDPC code with $O(d)$ rounds of syndrome
extraction. Lifted surgery amortises this cost on Abelian group-algebra codes by measuring an
orbit of logical operators under a translation symmetry in one merged code. We extend it to
two-block group-algebra codes over non-Abelian groups and ask whether non-commutativity lets one
merged code measure more operators than any Abelian group of its symmetries. We
classify the block-affine automorphisms of these codes and find that, measured against this full
group, most apparent non-Abelian gains disappear. We prove that the gain is at most the index of
a largest Abelian subgroup. We find codes with exact distance up to $13$ where the gain is
two, including codes in which one merged code reads out every logical qubit, and codes
over products of $A_4$, $S_4$ and $\mathrm{SL}(2,3)$ with gain three for a best logical, up to
$[[336,26,12]]$. A merged-distance lemma gives a simple condition for the gadget to preserve
the code distance. In circuit-level simulations with Relay-BP decoding, at the error
rates we can resolve, the non-Abelian gadget is as reliable as, or more reliable than, the
Abelian gadgets it replaces within statistical error, while using two to three times fewer
rounds.
\end{abstract}

\section{Introduction}\label{sec:intro}

Quantum low-density parity-check (qLDPC) codes achieve constant encoding rate with stabiliser
checks of bounded weight \cite{panteleev2022,bravyi2024}. In code surgery, the qLDPC analogue of
lattice surgery \cite{horsman2012}
\cite{cohen2022,cowtan2024css,cross2024,williamson2024,ide2024homological}, a logical Pauli
operator is measured by attaching an ancilla system to its support, running $O(d)$ rounds of
syndrome extraction in the resulting merged code, and detaching. Recent work has reduced the
space \cite{cowtan2025parallel,he2025extractors,swaroop2025adapters} and time
\cite{cowtan2025fast,chang2026constant} overhead of individual measurements and has shown how to
measure several operators in parallel \cite{cowtan2025parallel}. When many operators must be
measured, as in Pauli-based computation on a block with $k$ logical qubits, the number of
syndrome rounds remains a bottleneck.

Lifted surgery \cite{berent2026lifted} addresses this for Abelian group-algebra codes such as
bivariate-bicycle and radial codes. A translation symmetry maps a logical operator to an orbit of
logical operators, and an orbit with pairwise disjoint supports can be measured in one merged
code. On quantum radial codes this reduces the number of rounds by an order of magnitude.

Several of the most efficient small qLDPC codes are built from non-Abelian groups. Two-block
group-algebra (2BGA) codes, the non-Abelian generalisation of bivariate-bicycle codes, were
introduced by Lin and Pryadko and by Wang, Lin and Pryadko \cite{lin2024quantum,wang2023abelian}.
Mitten codes \cite{bhardwaj2026mitten}, lifted-product codes over non-Abelian groups with a
$1\times2$ base matrix, evade distance bounds constraining Abelian two-block codes, and their
parallel surgery measures many canonical logicals in one merged code using the orbit of one
representative under multiplication by the group. That action is not a code automorphism.
Canonical bases are also known for cyclic lifted-product codes \cite{zheng2026canonical}. To our knowledge it has not been examined
whether the automorphism symmetry on which lifted surgery relies persists when the group is
non-commutative, or whether non-commutativity increases the number of operators that one merged
code can measure.

We study both questions for 2BGA codes. A conjugation-invariant check element lets the whole
non-Abelian group act by left translations at check weight six, and against Abelian groups of
left and right translations this action appears to give gains of up to two. But 2BGA codes have further automorphisms,
among them right translations by different elements on the two blocks, maps induced by outer
automorphisms of the group, block exchanges and relabellings by central units. Abelian groups
built from these restore the parallelism in most cases, and the translation group is not even an
invariant of the code. Our comparisons therefore use the full block-affine automorphism group on
both sides.

We classify the Tanner-level block-affine automorphisms of 2BGA codes (Theorem~\ref{thm:class})
and give a criterion for a code to be rigid, meaning that its only block-affine automorphisms are
the translations (Proposition~\ref{prop:rigid}). Against the full block-affine group a gain of two
remains: we certify it exactly in ten rigid codes with distances from six to thirteen and in one
non-rigid code, and in four of them one merged code reads out every logical qubit
(Table~\ref{tab:family}). A coset-cover bound limits the gain by the index of an Abelian subgroup
(Proposition~\ref{thm:cover}), and guided by it we certify $16$ rigid codes with gain at least
three for a best logical, $13$ of them also at code level, with distance up to twelve
(Table~\ref{tab:gain3}).

A merged-distance lemma shows that the parallel gadget keeps both code distances whenever every
ancilla graph has Cheeger constant at least one (Lemma~\ref{lem:merged}). We simulate the gadgets
under circuit-level noise, with CNOT schedules chosen against hook errors and Relay-BP decoding of
the $Z$ sector \cite{muller2025relay}. At the error rates we can resolve, from $p=2\times10^{-3}$
to $5\times10^{-3}$ ($3$ and $4\times10^{-3}$ for $[[240,18,10]]$), the full gadget on each of
the four non-Abelian codes simulated is more reliable than the Abelian gadgets it replaces or
indistinguishable from them, and it uses two to three times fewer rounds. These error rates are
close to the pseudo-thresholds of the codes, and lower ones remain to be tested. All
constructions are available in the open-source package \texttt{liftsurg}.

Section~\ref{sec:prelim} reviews 2BGA codes and gauging surgery, Sections~\ref{sec:aut}
to~\ref{sec:codes} develop the automorphism theory, the gain and the certified codes,
Section~\ref{sec:ft} treats fault tolerance and Section~\ref{sec:discussion} discusses the
results. Proofs, further results and methods are in the appendices.

\section{Preliminaries}\label{sec:prelim}

\subsection{Two-block group-algebra codes}\label{sec:2bga}

For a CSS code with check matrices $H_X$ and $H_Z$ we write $L_Z=\ker H_X/\rowsp H_Z$,
$L_X=\ker H_Z/\rowsp H_X$ and $k=n-\rank H_X-\rank H_Z$. A qubit permutation is an automorphism if
it preserves both row spaces.

Let $G$ be a finite group and $\FG$ its group algebra over $\F_2$, the formal sums of group
elements with coefficients $0$ or $1$, multiplied by the group law. An element $a\in\FG$ is
identified with its support $\supp(a)\subseteq G$, and $a^*=\sum_ga_gg^{-1}$. Let $\Lm(a)$ and
$\Rm(b)$ be the matrices of left and right multiplication, $x\mapsto ax$ and $x\mapsto xb$, in
the basis $G$. Left and right multiplication commute for any group, so
\begin{equation}\label{eq:2bga}
H_X=[\Lm(a)\;|\;\Rm(b)],\qquad H_Z=[\Rm(b)^{\mathsf T}\;|\;\Lm(a)^{\mathsf T}]
\end{equation}
defines a CSS code $\Code(a,b)$ on $2|G|$ qubits with check weight $|\supp a|+|\supp b|$, the
two-block group-algebra code \cite{lin2024quantum,wang2023abelian}. Note that
$\Lm(a)^{\mathsf T}=\Lm(a^*)$ and $\Rm(b)^{\mathsf T}=\Rm(b^*)$. For Abelian $G$ these are the
bivariate-bicycle codes \cite{bravyi2024}. The qubits are $G\sqcup G$ and a vector on them is a
pair $(x,y)\in\FG^2$. The row of $H_X$ indexed by $r\in G$ is $(a^*r,\,rb^*)$ and the row of $H_Z$
indexed by $t$ is $(tb,\,at)$, so
\begin{equation}\label{eq:rowspaces}
\rowsp H_X=\{(a^*s,\,sb^*):s\in\FG\},\qquad \rowsp H_Z=\{(tb,\,at):t\in\FG\}.
\end{equation}
Let $\lambda_g$ be the permutation $x\mapsto gx$ on both blocks and $\rho_h$ the permutation
$x\mapsto xh$.

\begin{proposition}\label{prop:aut}
$\lambda_g$ is an automorphism of $\Code(a,b)$ whenever $ga=ag$ in $\FG$, and $\rho_h$ whenever
$hb=bh$. The group of such $g$ is the set-stabiliser $\Cz(a)=\{g: g\,\supp(a)\,g^{-1}=\supp(a)\}$
of the support under conjugation.
\end{proposition}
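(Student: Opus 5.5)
The plan is to work directly with the row-space description \eqref{eq:rowspaces}. A qubit permutation is an automorphism exactly when it maps $\rowsp H_X$ and $\rowsp H_Z$ into themselves; since it is a bijection on a finite space, it then maps each onto itself. For the permutation $\lambda_g$ acting on both blocks, I would compute its effect on a generic row-space vector. For an $X$-row-space vector this gives
\[
\lambda_g(a^*s,\,sb^*)=(ga^*s,\,gsb^*).
\]
For a $Z$-row-space vector it gives
\[
\lambda_g(tb,\,at)=(gtb,\,gat).
\]

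The next step is to use the hypothesis $ga=ag$. Since $g$ is invertible, this is equivalent to $g^{-1}ag=a$. Applying the anti-involution $*$ yields $g^{-1}a^*g=a^*$, so $ga^*=a^*g$ as well. Then $(ga^*s,\,gsb^*)=(a^*(gs),\,(gs)b^*)$, which lies in $\rowsp H_X$ with parameter $s'=gs$. Similarly $(gtb,\,gat)=((gt)b,\,a(gt))$ lies in $\rowsp H_Z$ with $t'=gt$. So both row spaces are preserved.

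The statement for $\rho_h$ is the mirror image. From $hb=bh$ I would derive $hb^*=b^*h$ in the same way. Then
\[
\rho_h(a^*s,\,sb^*)=(a^*sh,\,sb^*h)=(a^*(sh),\,(sh)b^*),
\]
and
\[
\rho_h(tb,\,at)=(tbh,\,ath)=((th)b,\,a(th)).
\]
Both images again lie in the appropriate row spaces.

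For the last sentence of the proposition, I would use that $\supp$ identifies elements of $\FG$ with subsets of $G$. The element $gag^{-1}$ has support $g\,\supp(a)\,g^{-1}$, because conjugation is a bijection on $G$ and no cancellation can occur. Hence $ga=ag$ if and only if $g\,\supp(a)\,g^{-1}=\supp(a)$, which defines $\Cz(a)$. The set of such $g$ is a subgroup, namely the stabiliser of a point under the conjugation action of $G$ on its subsets.

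The only step needing care is the passage from commuting with $a$ to commuting with $a^*$, and the check that $*$ reverses products in $\FG$: on basis elements $(xy)^*=y^{-1}x^{-1}$, extended linearly. I expect no real obstacle here; the argument is routine bookkeeping.
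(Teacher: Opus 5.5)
Your proposal is correct and is essentially the paper's argument. The paper factors $\Lm(g^{-1})$ out of the column-permuted check matrices to get a row-equivalent matrix. You instead reparametrise the row spaces of \eqref{eq:rowspaces} by $s\mapsto gs$ and $t\mapsto gt$ (or $sh$, $th$). Both rest on the same two facts: left and right multiplications commute, and $ga=ag$ is equivalent to $ga^*=a^*g$.
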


This sharpens the observation of Lin and Pryadko that the centraliser of the subgroup generated
by $\supp(a)$ acts on the code \cite{lin2024quantum}, and Eberhardt and Steffan noted that their
commutative treatment could be extended by keeping track of left and right multiplication
separately \cite{eberhardt2024fold}. If $\supp(a)$ is a union of conjugacy classes, $\Cz(a)=G$ and the whole
group acts by left translation, whatever $b$ is. When $G$ is the direct product of a non-Abelian
group with $\ZZ_m$, a central element $(e,y^j)$ together with a class of size two, such as
$\{(r,y^i),(r^2,y^i)\}$ in $S_3\times\ZZ_m$, gives $|\supp a|=3$ and check weight six. Lifted
surgery organises the logical space by the primitive central idempotents of the symmetry algebra,
and Appendix~\ref{app:comeas} shows how to find them when the symmetry group is non-Abelian.

\subsection{Gauging surgery and co-measurable sets}\label{sec:gauging}\label{sec:comeas}

A $Z$-type logical operator $Z_v$ with support $v$ is measured by gauging \cite{williamson2024}.
We choose a connected graph on the vertex set $v$, add one ancilla qubit per edge prepared in
$|+\rangle$, and add for each vertex $u\in v$ the \emph{Gauss-law} check
$G_u=Z_u\prod_{e\ni u}Z_e$, a new $Z$-type check that couples the data qubit $u$ to the ancillas
on its edges. Each $X$-check $r$ overlaps $v$ in an even number of qubits. It is deformed by $X$
on an edge set with boundary $r\cap v$ so that it commutes with the Gauss laws. Cycle checks
$\prod_{e\in c}X_e$ are then added over a basis of the cycle space. The product of all Gauss laws
is $Z_v$, so one round of the merged code measures it, and $O(d)$ rounds protect the outcome
against measurement errors. Operators with pairwise disjoint supports are measured in parallel by
attaching one graph each (Appendix~\ref{app:circuits} gives the construction we use).

A set of $Z$-logicals is co-measurable (Definition~\ref{def:gain}) if their supports are pairwise
disjoint and they are linearly independent modulo $\rowsp H_Z$. Its $m$ operators then extend to
a basis of $L_Z$ with a symplectically paired basis of $L_X$, and in that basis one merged code
measures $\bar Z_1,\dots,\bar Z_m$ simultaneously and non-destructively, while the other $k-m$
logical qubits remain as the logical qubits of the merged code. This is a parallel $Z$-basis
readout of $m$ of the $k$ logical qubits, or $m$ commuting measurements of a Pauli-based
computation compiled into this basis \cite{he2025extractors}. When $m=k$, one merged code reads
out every logical qubit.

\subsection{Notation}\label{sec:notation}

We write $D_n$ for the dihedral group acting on $\ZZ_n$ with $r:i\mapsto i+1$ and
$s_j:i\mapsto j-i$, and $S_3=D_3$ in the same notation. We also write $\ZZ_m=\langle y\rangle$ and
$\ZZ_9\rtimes_r\ZZ_{12}=\langle x,y\mid x^9=y^{12}=e,\ yxy^{-1}=x^r\rangle$, with elements written
$(x^i,y^j)$. For $r=8$ the action inverts $\langle x\rangle$ and the group has centre of order
six. $Z(G)$ denotes the centre of $G$.

\section{Automorphisms and rigidity}\label{sec:aut}

\subsection{Block-affine maps}\label{sec:blockaffine}

Translations are not the only automorphisms that the group structure provides. A \emph{block map}
is a permutation of $G$ of the form $x\mapsto g\,\varphi(x)\,h$ (sign $+$) or
$x\mapsto g\,\varphi(x^{-1})\,h$ (sign $-$) with $g,h\in G$ and $\varphi\in\Aut(G)$. These maps
form $\Hol(G)\cup\Hol(G)\iota$, the holomorph and its composition with the inversion $\iota$. A
\emph{block-affine} map of $G\sqcup G$ applies a block map $\sigma_1$ to the first block and
$\sigma_2$ to the second, either keeping the blocks, $(x,y)\mapsto(\sigma_1x,\sigma_2y)$, or
exchanging them, $(x,y)\mapsto(\sigma_2y,\sigma_1x)$. It is \emph{Tanner-level} for $\Code(a,b)$
if it permutes the $X$-check supports among themselves and the $Z$-check supports among
themselves. Every Tanner-level map preserving the code is a code automorphism, but the converse
can fail.

Three distinguished maps are $I(x,y)=(x^{-1},y^{-1})$, $W(x,y)=(y,x)$ and $J=W\circ I$. Every
block-affine map whose block maps have equal signs is $\sigma^+\circ\epsilon$ with
$\epsilon\in\{\mathrm{id},I,W,J\}$ and $\sigma^+=[g_1,h_1\,|\,g_2,h_2\,;\varphi_1,\varphi_2]:
(x,y)\mapsto(g_1\varphi_1(x)h_1,\,g_2\varphi_2(y)h_2)$. We drop $\varphi_2$ when
$\varphi_1=\varphi_2$. The parameters are determined up to
$g\varphi(x)h=(gk)(c_{k^{-1}}\circ\varphi)(x)(k^{-1}h)$, $c_k(x)=kxk^{-1}$, so the class
$[\varphi]\in\Out(G)$ is well defined. With $A=\supp a$ and $B=\supp b$, we define for
$X\subseteq G$ the \emph{twist group}
\[
T(X)=\{(h,k)\in G\times G:\ Xh=kX\},\qquad N(X)=\{h:\exists k,\ Xh=kX\},
\]
and, extending the notation of Proposition~\ref{prop:aut} from $\supp(a)$ to any $X$,
$\Cz(X)=\{g:gX=Xg\}$. We use one hypothesis on the supports alone.

\smallskip
\begin{center}
\begin{minipage}{0.9\textwidth}
(H)\ \ The one-sided set stabilisers $\{w:wA=A\}$, $\{w:Aw=A\}$, $\{w:wB=B\}$ and $\{w:Bw=B\}$
are trivial.
\end{minipage}
\end{center}
\smallskip

\noindent Under (H) the $|G|$ checks of each type have pairwise distinct supports. (H) holds for
all $22$ codes on which the classification below was tested (Section~\ref{sec:rigidity}).

\subsection{Classification}\label{sec:class}

Two lemmas supply the automorphisms of Theorem~\ref{thm:class}(a), and Lemma~\ref{lem:JIW} in
Appendix~\ref{app:twist} supplies the exchange and inversion maps. All proofs are in
Appendix~\ref{app:proofs}.

\begin{lemma}[Transport]\label{lem:transport}
For $\sigma^+=[g_1,h_1\,|\,g_2,h_2\,;\varphi]$, $\sigma^+(\Code(a,b))=\Code(a',b')$ with
$a'=g_2\,\varphi(a)\,g_1^{-1}$ and $b'=h_2^{-1}\,\varphi(b)\,h_1$. The map sends the $X$-check $r$
to the $X$-check $g_2\varphi(r)h_1$ and the $Z$-check $t$ to the $Z$-check $g_1\varphi(t)h_2$,
so it is a Tanner-level isomorphism.
\end{lemma}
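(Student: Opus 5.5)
We verify directly on the row spaces \eqref{eq:rowspaces}, extended $\F_2$-linearly. The map $\sigma^+$ acts on the first block by $x\mapsto g_1\varphi(x)h_1$ and on the second by $y\mapsto g_2\varphi(y)h_2$. Since $\varphi$ is a group automorphism, it extends to an algebra automorphism of $\FG$ that commutes with ${}^*$, because $\varphi(g^{-1})=\varphi(g)^{-1}$. The action on pairs is therefore $(x,y)\mapsto(g_1\varphi(x)h_1,\,g_2\varphi(y)h_2)$ for $x,y\in\FG$. Because $\sigma^+$ is a qubit permutation that maps supports to supports, it suffices to track the individual check rows: each row is sent to the row of the claimed check.

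\emph{$X$-checks.} Take the $X$-check row $(a^*r,\,rb^*)$ and apply $\sigma^+$. The first block becomes $g_1\varphi(a)^*\varphi(r)h_1$. The target is the row of the check $r'=g_2\varphi(r)h_1$ in $\Code(a',b')$, whose first block is $a'^*r'$. Compute $a'^*=g_1\varphi(a)^*g_2^{-1}$, which uses $(uv)^*=v^*u^*$ and the fact that $g^*=g^{-1}$ for a single group element. Then
\[
a'^*r'=g_1\varphi(a)^*g_2^{-1}g_2\varphi(r)h_1=g_1\varphi(a^*r)h_1,
\]
which matches. For the second block, $b'^*=h_1^{-1}\varphi(b)^*h_2$, so
\[
r'b'^*=g_2\varphi(r)h_1h_1^{-1}\varphi(b^*)h_2=g_2\varphi(rb^*)h_2,
\]
which again matches.

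\emph{$Z$-checks.} Take the $Z$-check row $(tb,\,at)$ and the check $t'=g_1\varphi(t)h_2$. Then
\[
t'b'=g_1\varphi(t)h_2h_2^{-1}\varphi(b)h_1=g_1\varphi(tb)h_1,
\qquad
a't'=g_2\varphi(a)g_1^{-1}g_1\varphi(t)h_2=g_2\varphi(at)h_2,
\]
and both agree with the image of $(tb,at)$.

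Since $r\mapsto g_2\varphi(r)h_1$ and $t\mapsto g_1\varphi(t)h_2$ are bijections of $G$, the row sets are carried onto each other bijectively. The image code is therefore exactly $\Code(a',b')$, and the map is Tanner-level. The only point needing care is the bookkeeping of the conjugation ${}^*$ through $\varphi$ and through the group-element factors; everything else is the cancellations $g_2^{-1}g_2$, $h_1h_1^{-1}$, $h_2h_2^{-1}$ and $g_1^{-1}g_1$ shown above.
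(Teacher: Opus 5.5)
Your proof is correct and follows the paper's own argument. You extend $\varphi$ to an algebra automorphism of $\FG$ commuting with ${}^*$, then verify that the image of the $X$-row $(a^*r,rb^*)$ is the row of $g_2\varphi(r)h_1$ in $\Code(a',b')$ and that the image of the $Z$-row $(tb,at)$ is the row of $g_1\varphi(t)h_2$, using the same cancellations. The only cosmetic difference is that the paper runs the computation for arbitrary $s,t\in\FG$, covering the whole row spaces at once, whereas you work check by check and then use bijectivity on $G$.
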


\begin{lemma}[Central units]\label{lem:units}
If $u$ is a central unit of $\FG$, then $\Code(au,ub)=\Code(a,b)$.
\end{lemma}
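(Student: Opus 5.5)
The plan is to compare the two row spaces directly, using the description \eqref{eq:rowspaces}: $\rowsp H_X=\{(a^*s,\,sb^*):s\in\FG\}$ and $\rowsp H_Z=\{(tb,\,at):t\in\FG\}$. A CSS code is determined by these two row spaces, so it suffices to show that replacing $(a,b)$ by $(au,ub)$ leaves both sets unchanged.

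For the $Z$ row space of $\Code(au,ub)$, the generic element is $(t\,ub,\,au\,t)$. Since $u$ is central, $au\,t=a\,(ut)=a\,(tu)$. With $t'=tu=ut$, the element becomes $(t'b,\,at')$. The map $t\mapsto tu$ is a bijection of $\FG$, because $u$ is a unit. So the set $\{(tub,\,aut)\}$ equals $\{(t'b,\,at')\}$, which is $\rowsp H_Z$ of $\Code(a,b)$.

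For the $X$ row space we need $(au)^*=u^*a^*$ and $(ub)^*=b^*u^*$. These follow because $*$ is the anti-involution induced by $g\mapsto g^{-1}$, so $(xy)^*=y^*x^*$. We also need $u^*$ to be a central unit: it is a unit with inverse $(u^{-1})^*$, and it is central since applying $*$ to $ux=xu$ for all $x$ gives $x^*u^*=u^*x^*$, and $x^*$ ranges over all of $\FG$. The generic row is then $(u^*a^*s,\,s\,b^*u^*)$. Using centrality, $u^*a^*s=a^*(u^*s)$ and $s\,b^*u^*=(u^*s)\,b^*$. Setting $s'=u^*s$, which is again a bijection of $\FG$, gives $(a^*s',\,s'b^*)$, so the $X$ row space is unchanged as well.

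The only point needing care is that the row-space descriptions \eqref{eq:rowspaces} are valid for arbitrary elements of $\FG$, not just group elements. They are, since rows indexed by group elements span the images of $s\mapsto(a^*s,sb^*)$ over all $s\in\FG$ by linearity. With that checked, both row spaces coincide and the two codes are literally the same code on the same qubits, so no permutation is involved.
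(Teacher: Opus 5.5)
Your proof is correct and is essentially the paper's own argument: use centrality of $u$ (and of $u^*$) to rewrite the rows of $\Code(au,ub)$ as $(a^*(u^*s),(u^*s)b^*)$ and $((tu)b,a(tu))$, then note that $s\mapsto u^*s$ and $t\mapsto tu$ are bijections of $\FG$, so both row spaces are unchanged. You also check explicitly that $u^*$ is a central unit and that the row-space descriptions hold for all $s,t\in\FG$, which the paper leaves implicit.
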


For a central group element $z$ this only relabels the checks. For a central unit that is not a
group element, the checks of $\Code(au,ub)$ are other elements of the same row spaces.

\begin{theorem}[Block-affine automorphisms]\label{thm:class}
Let $\Code=\Code(a,b)$ and $\sigma^+=[g_1,h_1\,|\,g_2,h_2\,;\varphi_1,\varphi_2]$.
\begin{enumerate}
\item[(a)] \textup{(Exact reduction.)} If $\varphi_1=\varphi_2=\varphi$, then $\sigma^+\in\Aut(\Code)$
if and only if $\Code(a',b')=\Code(a,b)$, with $a',b'$ as in Lemma~\ref{lem:transport}. In
particular $\sigma^+\in\Aut(\Code)$ whenever $(a',b')=(au,ub)$ for a central unit $u$.
\item[(b)] \textup{(Tanner level.)} Assume (H). Then $\sigma^+$ is a Tanner-level automorphism if and
only if $\varphi_2=c_k\circ\varphi_1$ for some $k\in G$ and, with both block maps written with a
common $\varphi$, some parameter choice satisfies
\begin{equation}\label{eq:P}
\varphi(A)=g_2^{-1}\,A\,g_1,\qquad \varphi(B)=h_2\,B\,h_1^{-1}.
\end{equation}
\item[(c)] \textup{(Exchange with inversion.)} $J\circ\sigma^+\in\Aut(\Code)$ if and only if
$\sigma^+(\Code(a,b))=\Code(b^*,a^*)$; under (H) it is Tanner-level if and only if some parameter
choice satisfies $g_2\varphi(A)g_1^{-1}=B^{-1}$ and $h_2^{-1}\varphi(B)h_1=A^{-1}$.
\item[(d)] \textup{(Inversion without exchange, exchange without inversion.)} Assume (H). If a
Tanner-level automorphism has two block maps of sign $-$ without exchange, or of sign $+$ with
exchange, then $A$ and $B$ are both twisted-central: there are $\alpha,\beta\in\Aut(G)$ with
$Ak=\alpha(k)A$ and $Bk=\beta(k)B$ for all $k\in G$. If $a$ and $b$ are central, $\sigma^+\circ I$ is
a Tanner-level automorphism if and only if $g_2\varphi(A^{-1})g_1^{-1}=A$ and
$h_2^{-1}\varphi(B^{-1})h_1=B$ for some parameters, and $\sigma^+\circ W$ is one if and only if
$g_2\varphi(B)g_1^{-1}=A$ and $h_2^{-1}\varphi(A)h_1=B$.
\item[(e)] \textup{(No mixed signs.)} Assume (H) and that $G$ is non-Abelian. No Tanner-level
automorphism has block maps of different signs.
\end{enumerate}
\end{theorem}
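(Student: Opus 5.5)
Part (a) follows from Lemma~\ref{lem:transport} and Lemma~\ref{lem:units}, and parts (b)--(e) from a support-matching argument that uses (H) to make checks behave as labelled objects. For (a), a qubit permutation is an automorphism if and only if it maps $\Code(a,b)$ to itself. Since $\sigma^+(\Code(a,b))=\Code(a',b')$ by Lemma~\ref{lem:transport}, $\sigma^+\in\Aut(\Code)$ if and only if $\Code(a',b')=\Code(a,b)$. The central-unit clause is then Lemma~\ref{lem:units}.

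For (b), suppose $\sigma^+$ is Tanner-level. The $X$-check indexed by $r$ has support $(A^{-1}r,\,rB^{-1})$. Its image under $\sigma^+$ is $(g_1\varphi_1(A^{-1}r)h_1,\;g_2\varphi_2(rB^{-1})h_2)$, and this must equal $(A^{-1}r',\,r'B^{-1})$ for some $r'$. Comparing the first blocks gives $g_1\varphi_1(A^{-1})\varphi_1(r)h_1=A^{-1}r'$ for every $r$.
\begin{itemize}
\item Setting $r=e$ gives $g_1\varphi_1(A^{-1})h_1=A^{-1}r'_e$.
\item Dividing the general relation by the $r=e$ relation (right-multiplying by suitable inverses) gives $A^{-1}r'_e\,h_1^{-1}\varphi_1(r)h_1=A^{-1}r'_r$.
\item By (H) the set $A^{-1}$ has trivial left stabiliser. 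So the right translate of $A^{-1}$ determines the translating element uniquely, and $r'_r=r'_e\,h_1^{-1}\varphi_1(r)h_1$. Thus $r\mapsto r'$ is affine.
\end{itemize}
The same computation on the second block gives $r'_r=\tilde r\,\varphi_2(r)\cdots$. Equating the two affine expressions forces $\varphi_2$ to differ from $\varphi_1$ by an inner automorphism $c_k$. After absorbing $k$ into the parameters, the $r=e$ equations become the first relation of \eqref{eq:P}. The $Z$-checks $(tb,at)$ give the $B$ relation in the same way. For the converse, if \eqref{eq:P} holds then $a'$ and $b'$ in Lemma~\ref{lem:transport} have supports $A$ and $B$, so $a'=a$ and $b'=b$ over $\F_2$. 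The check-to-check formulas of that lemma then show the map is Tanner-level.

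Part (c) is the same argument after composing with $J$. One checks directly that $J$ maps $\Code(a,b)$ onto $\Code(b^*,a^*)$ and sends the checks of one code to the checks of the other. Applying (a) and (b) to the pair $(a,b)\to(b^*,a^*)$ gives the stated conditions $g_2\varphi(A)g_1^{-1}=B^{-1}$ and $h_2^{-1}\varphi(B)h_1=A^{-1}$.

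For (d) and (e) I would redo the matching with sign-$-$ block maps or with the exchange $W$. Take inversion without exchange. The first block of the image of check $r$ is $g_1\varphi(r^{-1}A)h_1$, so $r$ now appears on the left of $A$, while a genuine check has $r'$ on the right. The requirement $\varphi(r^{-1})\,\varphi(A)\,h_1 = g_1^{-1}A^{-1}r'$ for all $r$ then says that left-translating the fixed set $\varphi(A)$ by arbitrary elements is the same as right-translating a fixed set. By (H) the translating elements are unique, so $r\mapsto r'$ is a bijection intertwining left and right translation of $A$. Showing that this bijection is an automorphism $\alpha$ with $Ak=\alpha(k)A$ is the key step. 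Uniqueness from (H) gives $\alpha(kl)=\alpha(k)\alpha(l)$ by translating twice. For central $a,b$, conjugation invariance makes left and right translates coincide, and the conditions collapse to the displayed equations, exactly as in (b). Exchange without inversion is handled in the same way, pairing $A$ with $B$.

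For (e), with mixed signs the two blocks produce $r'$ as an affine function of $r$ on one block and of $r^{-1}$ on the other. Equating the two gives $\psi(r)=c\,\varphi(r^{-1})\,c'$ for automorphisms $\psi,\varphi$. Hence inversion is affine, so inversion composed with an automorphism is a homomorphism, which forces $G$ to be Abelian.

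I expect the main obstacle to be the bookkeeping in (d), in particular extracting a genuine automorphism $\alpha$ from the intertwining bijection and handling the variant where the exchange is composed with inversion. I would first try to reduce everything to the uniqueness statement supplied by (H), which is that $XS=X$ implies $S=\{e\}$ for each of the relevant sets $X$.
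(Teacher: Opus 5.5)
Your proposal is correct and follows essentially the same route as the paper: the induced label map with the $r=e$ normalisation and (H)-uniqueness for (b), composition with $J$ for (c), the translate-twice multiplicativity argument for twisted-centrality in (d), and, for (e), the observation that an affine map in $r$ cannot also be affine in $r^{-1}$ unless $G$ is Abelian. Two small corrections: the uniqueness step uses the trivial \emph{right} stabiliser of $A^{-1}$, not the left one; and the $B$ relation of \eqref{eq:P} already follows from the second block of the $X$-checks at $r=e$ once $p=g_2h_1$, so you do not need a separate $Z$-check computation, whose parameters would otherwise have to be reconciled with the $X$-check normalisation up to a central element.
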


With $\varphi=\mathrm{id}$, condition \eqref{eq:P} reads $(g_1,g_2)\in T(A)$ and
$(h_1,h_2)\in T(B)$. The pairs on the diagonal are the ordinary translations of
Proposition~\ref{prop:aut}. The others are \emph{twisted translations}, for instance
$(x,y)\mapsto(xh,\,yk)$ with $Bh=kB$ and $h\neq k$. Under (H), Theorem~\ref{thm:class}(b) also
gives the converse of Proposition~\ref{prop:aut} at the Tanner level, since $\lambda_g$ permutes
the checks only if $ga=ag$. Appendix~\ref{app:twist} describes the structure of $T(X)$ and of the
group $\Aut^{\mathrm T}_{\mathrm{ba}}(\Code)$ of Tanner-level block-affine automorphisms, which
contains the translation group $\Tr(\Code)=\{\lambda_g\rho_h:g\in\Cz(a),h\in\Cz(b)\}$
(Corollary~\ref{cor:structure}). A twist of the form $(h,whw^{-1})$ is an ordinary right
translation of an equivalent presentation (Theorem~\ref{thm:twist}(iv)), so the translation group
is not an invariant of the code.

\subsection{Rigidity}\label{sec:rigidity}

\begin{definition}[Rigidity]
$\Code(a,b)$ is \emph{rigid} if every block-affine code automorphism is a translation, that is,
$\Aba(\Code)=\Tr(\Code)$, where $\Aba(\Code)$ is the group of block-affine code automorphisms.
\end{definition}

\begin{proposition}[Rigidity criterion]\label{prop:rigid}
Assume (H) and let $G$ be non-Abelian. Then $\Aut^{\mathrm T}_{\mathrm{ba}}(\Code)=\Tr(\Code)$ iff
(R1) there are no twists, $N(A)=\Cz(a)$ and $N(B)=\Cz(b)$; (R2) every $\varphi$ satisfying
\eqref{eq:P} for some $g_1,g_2,h_1,h_2$ is inner; (R3) the system of
Theorem~\ref{thm:class}(c) has no solution; and (R4) no Tanner-level automorphism has sign $-$
without exchange or sign $+$ with exchange, which holds automatically unless $A$ and $B$ are
both twisted-central and, for central $a,b$, holds iff neither system of
Theorem~\ref{thm:class}(d) has a solution. $\Code$ is rigid iff, in addition, every block-affine
automorphism is Tanner-level.
\end{proposition}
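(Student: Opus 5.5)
The plan is to decompose an arbitrary Tanner-level block-affine automorphism according to Theorem~\ref{thm:class}. The main job is to check that each of (R1)--(R4) excludes exactly one family of non-translation maps, and that together they exclude all of them. By part (e), which applies because (H) holds and $G$ is non-Abelian, both block maps have the same sign. So every Tanner-level automorphism is $\sigma^+\circ\epsilon$ with $\epsilon\in\{\mathrm{id},I,W,J\}$, and I would treat the four cases in turn.

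\emph{Case $\epsilon=\mathrm{id}$.} By Theorem~\ref{thm:class}(b), $\sigma^+$ is Tanner-level if and only if $\varphi_2=c_k\circ\varphi_1$ and \eqref{eq:P} holds for some parameter choice with a common $\varphi$.
\begin{enumerate}
\item[(i)] Suppose (R2) holds. Then $\varphi$ is inner, say $\varphi=c_w$. Absorbing $w$ into the translation parameters with the reparametrisation $g\varphi(x)h=(gk)(c_{k^{-1}}\circ\varphi)(x)(k^{-1}h)$, we may take $\varphi=\mathrm{id}$.
\item[(ii)] With $\varphi=\mathrm{id}$, condition \eqref{eq:P} says $(g_1,g_2)\in T(A)$ and $(h_1,h_2)\in T(B)$.
\item[(iii)] I need to show that (R1) forces these pairs onto the diagonal. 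Suppose $Ah=kA$. Since $N(A)=\Cz(a)$, we have $Ah=hA$, so $kA=hA$ and $h^{-1}kA=A$. Hypothesis (H) then gives $k=h$. The same argument applies to $B$.
\item[(iv)] The map is therefore $\lambda_g\rho_h$ with $g\in\Cz(a)$ and $h\in\Cz(b)$, which lies in $\Tr(\Code)$.
\end{enumerate}
The remaining cases are quicker. For $\epsilon=J$, part (c) applies and (R3) removes it. For $\epsilon\in\{I,W\}$, part (d) applies and (R4) removes it. Part (d) also gives the stated automatic clause: unless $A$ and $B$ are both twisted-central, these cases cannot occur at all.

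\emph{Converse.} I would show each condition is necessary by exhibiting a non-translation automorphism whenever it fails. If (R1) fails, then some $h\in N(A)\setminus\Cz(a)$ gives, via (H), a pair $(h,k)\in T(A)$ with $k\neq h$. Here I have to check the left/right bookkeeping against \eqref{eq:P}, possibly applied to $A^{-1}$. The map $[g_1,\cdot\,|\,g_2,\cdot\,;\mathrm{id}]$ then satisfies (b), and it is not a translation because its two left parameters differ.

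The main obstacle I expect is showing that such a map is not secretly a translation under some other parametrisation. Because parameters are only defined up to $(gk,c_{k^{-1}}\varphi,k^{-1}h)$, I would argue directly with the permutation. A translation acts on the two blocks by the same pair $(g,h)$. For a twisted map, the induced maps on the two blocks differ by $x\mapsto g_2g_1^{-1}$-type factors, so no single pair can realise it. The same argument handles a non-inner $\varphi$ when (R2) fails: a translation acts on each block by an element of $\Hol(G)$ whose automorphism class in $\Out(G)$ is trivial, and that class is well defined. When (R3) or (R4) fails, the solution yields a map that exchanges the blocks, or inverts without exchanging them. Translations do neither, since they preserve blocks and have sign $+$.

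The final sentence of the proposition is immediate. We always have $\Tr\subseteq\Aut^{\mathrm T}_{\mathrm{ba}}\subseteq\Aba$, so $\Aba=\Tr$ holds if and only if both $\Aut^{\mathrm T}_{\mathrm{ba}}=\Tr$ and $\Aba=\Aut^{\mathrm T}_{\mathrm{ba}}$ hold.
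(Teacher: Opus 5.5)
Your proof is correct, but it argues map by map where the paper counts. The paper's proof is a short consequence of Corollary~\ref{cor:structure}. There the Tanner-level group is filtered by the type homomorphism, the class map onto $O(\Code)\le\Out(G)$ and the twist group, and $\Tr(\Code)$ is the diagonal part of $\Tw(\Code)$. Equality therefore holds iff only one type occurs ((R3), (R4)), $O(\Code)=1$ ((R2)), and $|T(A)|\,|T(B)|=|\Cz(a)|\,|\Cz(b)|$. The last condition is (R1) because $|T(X)|=|N(X)|$ by Theorem~\ref{thm:twist}(ii).

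You instead take one automorphism at a time. You absorb an inner $\varphi$ into the parameters, use (H) to push $(g_1,g_2)$ and $(h_1,h_2)$ onto the diagonal, and for the converse exhibit explicit non-translations. Your route needs neither the homomorphism property of the type map nor the order formula. In exchange you must show by hand that the exhibited maps are not translations under another parametrisation, which is the well-definedness content the corollary packages.

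Your checks for this are sound:
\begin{itemize}
\item A translation applies the same permutation $x\mapsto gxh$ to both blocks, so $(x,y)\mapsto(hx,ky)$ with $h\neq k$ is not one.
\item The bookkeeping you flag is immediate: with $(g_1,g_2)=(h,k)$ and $h_1=h_2=e$, \eqref{eq:P} reads $Ah=kA$, and no $A^{-1}$ or (H) is needed to get $k\neq h$.
\item $g_1\varphi(x)h_1=gxh$ for all $x$ forces $\varphi$ to be inner.
\end{itemize}

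One point you should prove rather than assume: for non-Abelian $G$ the sign of a block map is an invariant of the permutation, that is, $\Hol(G)\cap\Hol(G)\iota=\emptyset$. Without it, the sign-$-$ map you produce when (R4) fails could coincide with a translation. The proof is one line: $g\varphi(x)h=g'\varphi'(x^{-1})h'$ for all $x$ would make inversion a composite of automorphisms, hence a homomorphism, so $G$ would be Abelian.

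The counting route also yields the index $[\Aut^{\mathrm T}_{\mathrm{ba}}(\Code):\Tr(\Code)]$ as a byproduct. Your treatment of the final rigidity sentence, via $\Tr\subseteq\Aut^{\mathrm T}_{\mathrm{ba}}\subseteq\Aba$, is correct and more explicit than the paper's.
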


The criterion does not settle (R4) when $A$ and $B$ are twisted-central but $a$ and $b$ are not
central, and it does not decide whether every block-affine automorphism is Tanner-level. In those
cases, and for every code in this paper, rigidity is established by exhaustive search, against
which we tested the classification on $22$ codes (Appendix~\ref{app:aut}). The block swap by an
involutive $\tau\in\Aut(G)$ with $\tau(a)=b$ known for two-block codes
\cite{eberhardt2024fold,davenport2026bicycle} is an exchange-type map in our language, and in our
set of codes it occurs only in the commutative bivariate-bicycle reference code.

\section{The non-Abelian gain}\label{sec:gain}

\begin{definition}[Co-measurability and gain]\label{def:gain}
Let $S$ be a group of code automorphisms and $v$ a $Z$-logical. A set of translates
$\{\sigma_1v,\dots,\sigma_mv\}$, $\sigma_i\in S$, is \emph{co-measurable} if the supports are
pairwise disjoint and the translates are linearly independent modulo $\rowsp H_Z$. The
co-measurable number $\mu_S(v)$ is the largest size of such a set. With $E=\Aba(\Code)$, the
\emph{non-Abelian gain} of $v$ is
\[
\gamma(v)=\frac{\mu_E(v)}{\max_{H}\mu_H(v)},
\]
the maximum being over the Abelian subgroups $H\le E$ (equivalently the maximal ones). For a
rigid code $E=\Tr(\Code)$.
\end{definition}

A naive gain uses left translations by $\Cz(a)$ in the numerator and the Abelian subgroups of
$\Tr(\Code)$ in the baseline. Both groups lie in $E$, and when $E$ is larger the baseline can be
too small. By Theorem~\ref{thm:twist}(iv) it is not even a property of the code.
Definition~\ref{def:gain} uses the same group on both sides.

\subsection{Upper bound and mirror doubling}\label{sec:mirror}

\begin{lemma}[Upper bound]\label{lem:bound}
For any group $S$ of code automorphisms and any $Z$-logical $v$, with $[Sv]\le L_Z$ the span of
the orbit modulo $\rowsp H_Z$ and $\reg(Sv)$ the union of the supports in the orbit,
\[
\mu_S(v)\;\le\;\min\Big(k,\ \dim[Sv],\ \Big\lfloor\frac{|\reg(Sv)|}{|v|}\Big\rfloor\Big)
\;\le\;\min\Big(k,\ \dim_{\F_2}\operatorname{span}(Sv),\ \Big\lfloor\frac{n}{|v|}\Big\rfloor\Big).
\]
\end{lemma}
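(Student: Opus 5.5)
Fix a co-measurable set $\{\sigma_1v,\dots,\sigma_mv\}$ with $\sigma_i\in S$ and $m=\mu_S(v)$. We bound $m$ by each term of the first minimum separately, and then compare that minimum termwise with the second.

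\emph{The bounds $k$ and $\dim[Sv]$.} By definition, the classes of $\sigma_1v,\dots,\sigma_mv$ modulo $\rowsp H_Z$ are linearly independent. Each $\sigma_iv$ is a $Z$-logical, since $\sigma_i$ is a code automorphism and therefore preserves $\ker H_X$ and $\rowsp H_Z$. So these classes are $m$ independent vectors lying in $[Sv]\le L_Z$. This gives $m\le\dim[Sv]\le\dim L_Z=k$.

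\emph{The packing bound.} Each $\sigma_i$ is a qubit permutation, so $|\sigma_iv|=|v|$. Each support $\sigma_iv$ lies in $\reg(Sv)$. The supports are pairwise disjoint, so $m|v|\le|\reg(Sv)|$. Since $m$ is an integer, $m\le\lfloor|\reg(Sv)|/|v|\rfloor$. Here $|v|>0$, because a nontrivial logical is nonzero.

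\emph{Second inequality.} We compare the three terms one at a time.
\begin{enumerate}
\item The first terms are both $k$.
\item For the second terms, $[Sv]$ is the image of $\operatorname{span}(Sv)\subseteq\ker H_X$ under the quotient map to $L_Z$. Hence $\dim[Sv]\le\dim_{\F_2}\operatorname{span}(Sv)$.
\item For the third terms, $\reg(Sv)$ is contained in the set of all $n$ qubits, so $|\reg(Sv)|\le n$. The floor function is monotone, so the bound carries over.
\end{enumerate}
Since $\min$ is monotone in each argument, the second inequality follows.

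No step is a real obstacle. The only points needing care are that automorphisms send $Z$-logicals to $Z$-logicals, which is used to place the classes inside $[Sv]$, and that they preserve weight, which is used in the packing bound.
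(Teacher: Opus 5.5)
Your proof is correct and follows essentially the same route as the paper: the independent classes lie in $[Sv]\le L_Z$, which gives $m\le\dim[Sv]\le k$, and weight preservation together with disjointness gives $m|v|\le|\reg(Sv)|\le n$. The only difference is cosmetic: the paper writes $\dim[Sv]=\rank[H_Z;Sv]-\rank H_Z$, while you view $[Sv]$ as the image of $\operatorname{span}(Sv)$ under the quotient map, which is the same fact.
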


Here $[Sv]$ is the cyclic $\F_2[S]$-submodule of $L_Z$ generated by $[v]$. A co-measurable set
of the size of the bound certifies $\mu_S$ without search. For $[[72,8,6]]$ and $[[216,16,9]]$
over $\ZZ_9\rtimes_8\ZZ_{12}$ both sides attain their bounds, so $\gamma=2$ is certified by linear
algebra and two exhibited sets alone. For a set $T$ of logical operators we write $\reg(T)$ for the union of their supports and
$[T]\le L_Z$ for their span modulo $\rowsp H_Z$.

\begin{lemma}[Mirror doubling]\label{lem:mirror}
Let $S$ act on a CSS code by automorphisms, let $T$ be a co-measurable set of $Z$-logicals and
let $s\in S$. If (i) $s\cdot\reg(T)\cap\reg(T)=\emptyset$ and (ii) $[T]\cap s[T]=0$ in $L_Z$,
then $T\cup sT$ is co-measurable. In particular, if $T\subseteq Hv$ for a subgroup $H\le S$ has maximum size and
some $s\in S$ satisfies (i) and (ii) for $T$, then $\mu_S(v)\ge2\mu_H(v)$.
\end{lemma}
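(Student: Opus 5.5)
The proof will check the two parts of co-measurability in Definition~\ref{def:gain} for $T\cup sT$: the supports must be pairwise disjoint, and the operators must be linearly independent modulo $\rowsp H_Z$. Then it will read off the inequality on $\mu_S$. Throughout we use that $s$ is a code automorphism, i.e.\ a qubit permutation preserving $\rowsp H_Z$ and $\rowsp H_X$. Hence $s$ maps $Z$-logicals to $Z$-logicals, acts linearly on $L_Z$, and satisfies $\supp(sw)=s\cdot\supp(w)$.

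For disjointness, note first that pairs within $T$ have disjoint supports by hypothesis. Pairs within $sT$ are disjoint because $s$ is a bijection on qubits, so $s$ carries disjoint sets to disjoint sets. For a mixed pair $w\in T$ and $sw'\in sT$, we have $\supp(w)\subseteq\reg(T)$ and $\supp(sw')\subseteq s\cdot\reg(T)$, and these are disjoint by (i). Condition (i) also gives $T\cap sT=\emptyset$ as sets of vectors, since every logical has nonempty support (a zero vector would lie in $\rowsp H_Z$ and contradict independence). Consequently $|T\cup sT|=2|T|$.

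For independence, $s$ induces an injective linear map on $L_Z$ because it is invertible and preserves $\rowsp H_Z$. So $sT$ is independent modulo $\rowsp H_Z$ and $\dim s[T]=\dim[T]=|T|$. By (ii) the sum $[T]+s[T]$ is direct, so its dimension is $2|T|$. It is spanned by the $2|T|$ classes of $T\cup sT$, so these classes are linearly independent. This proves the first claim.

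For the consequence, take $T\subseteq Hv$ co-measurable of size $\mu_H(v)$. Then $T\cup sT$ consists of translates $\sigma v$ with $\sigma\in H\cup sH\subseteq S$, so it is a co-measurable set of translates of $v$ under $S$. Its size $2\mu_H(v)$ therefore bounds $\mu_S(v)$ from below. There is no real obstacle here. The only points needing care are that a code automorphism acts well-definedly and injectively on $L_Z$, and that (i) forces $T$ and $sT$ to be disjoint as sets, so that the count is exactly $2|T|$.
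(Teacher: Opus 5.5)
Your proof is correct and follows the paper's argument. Disjointness comes from $s$ being a permutation together with (i), and independence comes from (ii) plus the invertibility of $s$ on $L_Z$. You phrase the latter as a dimension count of the direct sum $[T]\oplus s[T]$, while the paper works directly with a vanishing linear combination, but this is the same step; your remark that $T\cap sT=\emptyset$, so the union has exactly $2|T|$ elements, is a detail the paper leaves implicit.
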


Both conditions are checked by linear algebra. For $[[72,8,6]]$ over $S_3\times\ZZ_6$ the orbit
of the best Abelian subgroup of left translations covers all $72$ qubits, and of its $81$
co-measurable $4$-subsets, $45$ are mirrored in the sense of (i) and (ii) by a reflection
$s_j\in S_3$ (further examples in Appendix~\ref{app:reeval}).

\subsection{Twisted translations}\label{sec:twisted}

Twisted translations undo the mirror in two ways, through an orbit transfer that maps the logical
into the orbit of an Abelian group (Lemma~\ref{lem:transfer}) and through a twisted mirror that
doubles an Abelian set inside an Abelian group (Lemma~\ref{lem:tmirror}). Both lemmas, a code
that keeps a gain despite its twists (Remark~\ref{rem:counter}) and the evidence that rigidity is
neither necessary nor sufficient for a gain are in Appendix~\ref{app:reeval}. Rigidity is still
useful for certification: for a rigid code the block-affine baseline is the translation
baseline, which we compute exhaustively, so no block-affine automorphism can hide an Abelian group
that closes the gap.

\subsection{The coset-cover bound}\label{sec:cover}

The mirror lemma doubles an Abelian co-measurable set. How much further any number of
non-commuting elements can go is limited by a count of cosets.

\begin{proposition}[Coset cover]\label{thm:cover}
Let $E$ be a group of automorphisms of a CSS code, $v$ a $Z$-logical and $H\le E$ a subgroup, and
let $c_H(v)$ be the number of distinct sets $\sigma(Hv)$ with $\sigma\in E$. Then
$c_H(v)\le[E:H]$ and $\mu_E(v)\le c_H(v)\,\mu_H(v)$. Consequently
\[
\gamma(v)\;\le\;\min_{H}c_H(v)\;\le\;[E:H_{\max}],
\]
the minimum being over the Abelian subgroups of $E$ and $H_{\max}$ an Abelian subgroup of
largest order.
\end{proposition}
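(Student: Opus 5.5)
First bound $c_H(v)$ by the index. I would use the left cosets of $H$ in $E$. If $\sigma'=\sigma\eta$ with $\eta\in H$, then $\sigma'(Hv)=\sigma(\eta Hv)=\sigma(Hv)$, because $\eta H=H$. So the set $\sigma(Hv)$ depends only on the coset $\sigma H$. The map $\sigma H\mapsto\sigma(Hv)$ from left cosets onto the sets in question is therefore well defined and surjective, which gives $c_H(v)\le[E:H]$. Distinct cosets may still give the same set, for instance when the stabiliser of $Hv$ is larger than $H$; this is why $c_H(v)$ can be strictly smaller than the index.

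Next I would prove $\mu_E(v)\le c_H(v)\,\mu_H(v)$ by pigeonhole. Take a co-measurable set $\{\sigma_1v,\dots,\sigma_mv\}$ with $\sigma_i\in E$ and $m=\mu_E(v)$. Every element $\sigma v$ of the orbit lies in $\sigma(Hv)$, since $e\in H$. The orbit $Ev$ is therefore covered by the $c_H(v)$ sets $\tau_1(Hv),\dots,\tau_c(Hv)$. Assign each $\sigma_iv$ to one set containing it. By pigeonhole, some $\tau=\tau_j$ receives at least $m/c_H(v)$ of the translates.

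Now pull that class back by $\tau^{-1}$. The translates $\sigma_iv$ in $\tau(Hv)$ have the form $\tau\eta_iv$ with $\eta_i\in H$, so $\tau^{-1}$ maps them to $\eta_iv\in Hv$. Since $\tau^{-1}$ is a code automorphism, it is a qubit permutation. It therefore preserves pairwise disjointness of supports. It also preserves $\rowsp H_Z$ and hence acts linearly on $L_Z$, so it preserves linear independence modulo $\rowsp H_Z$. Any subset of a co-measurable set is co-measurable. Hence the pulled-back class is a co-measurable set of $H$-translates of $v$, and its size is at most $\mu_H(v)$. This gives $m/c_H(v)\le\mu_H(v)$, which is the claimed inequality.

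Finally I would combine the two bounds. For each Abelian $H\le E$, dividing gives
\[
\mu_E(v)/\max_{H'}\mu_{H'}(v)\le\mu_E(v)/\mu_H(v)\le c_H(v).
\]
Minimising over Abelian $H$ gives $\gamma(v)\le\min_H c_H(v)$. Taking $H=H_{\max}$ and using the first step gives $c_{H_{\max}}(v)\le[E:H_{\max}]$, which completes the chain. I expect the main subtlety to be only bookkeeping in this last step: checking that $\mu_H(v)\ge1$, which holds because the single translate $\{v\}$ is co-measurable for a nontrivial logical. No step looks genuinely hard.
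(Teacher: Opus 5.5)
Your proof is correct and follows the paper's argument. The steps match the paper's: the coset bound on $c_H(v)$, the covering of $Ev$ by the sets $\sigma(Hv)$, the transport of co-measurable subsets of $Hv$ by code automorphisms, and the division by the baseline; the only cosmetic difference is that you apply pigeonhole to a single class, whereas the paper sums $|T\cap\sigma(Hv)|\le\mu_H(v)$ over all $c_H(v)$ classes, which gives the same estimate.
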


\begin{proof}
The set $\sigma(Hv)$ depends only on the coset $\sigma H$, so $c_H(v)\le[E:H]$, and the sets
cover $Ev$ because $\sigma v\in\sigma(Hv)$. A code automorphism preserves disjointness of
supports and independence modulo $\rowsp H_Z$, so $\sigma$ maps the co-measurable subsets of
$Hv$ bijectively onto those of $\sigma(Hv)$, and each of the latter has at most $\mu_H(v)$
elements. A co-measurable $T\subseteq Ev$ is the union of its intersections with the $c_H(v)$
sets, each of them co-measurable, so $|T|\le c_H(v)\mu_H(v)$. If $H^\ast$ attains the baseline,
then for every Abelian $H$ we get $\mu_E(v)\le c_H(v)\mu_H(v)\le c_H(v)\mu_{H^\ast}(v)$, hence
$\gamma(v)\le c_H(v)$. Taking $H=H_{\max}$ gives the last inequality.
\end{proof}

An Abelian subgroup of index two in $E$ therefore caps the gain at two, the value the mirror
lemma reaches. A gain of three needs a block-affine group in which every Abelian subgroup has
index at least three. Even then the bound is the number of distinct translates of an Abelian
orbit, which can be smaller than the index.

\section{Codes with certified gains}\label{sec:codes}

\subsection{Gain two}\label{sec:gain2}

Over the full block-affine group most of the gains seen against translations alone disappear:
of $386$ codes with a left-only gain above one, $77$ have twists, and in $75$ of them an Abelian
subgroup of $E$ reaches the left co-measurable number (Appendix~\ref{app:reeval}). Five codes keep
$\gamma=2$: the
rigid $[[72,8,6]]$, $[[60,12,6]]$, $[[216,16,9]]$ over $\ZZ_9\rtimes_8\ZZ_{12}$ and
$[[216,12,12]]$ codes and the non-rigid $[[216,16,9]]$ code over $\ZZ_9\rtimes_2\ZZ_{12}$.

Table~\ref{tab:family} lists every code for which the block-affine group, the gain and the
distance are all certified exactly and the gain is two. Besides these five codes, it contains the
rigid gain-two candidate of highest distance from our code search (Section~\ref{sec:search}) in
six groups, and Fig.~\ref{fig:codes} places all certified codes against the search candidates.
Rigidity, $\mu_E$, the baseline and the distance are all computed exactly
(Appendices~\ref{app:aut}, \ref{app:comeas} and~\ref{app:dist}).

\begin{table}[!t]
\centering\small
\caption{Two-block codes with exact gain $\gamma=2$ over the block-affine group $E$ and exact
distance. Merged codes are in Table~\ref{tab:family_full}, supports in
Appendix~\ref{app:supports}.}
\label{tab:family}
\setlength{\tabcolsep}{3.5pt}
\begin{tabular}{llclrrcccl}
\toprule
$G$ & $[[n,k,d]]$ & $w$ & method & $|E|$ & $|Ev|$ & $\mu$ & base ($|H|$) & $\gamma$ & orbits\\
\midrule
$S_3\times\ZZ_6$ & $[[72,8,6]]$ & 6 & I & 36 & 36 & 8 & 4 (12) & 2 & n/a\\
$D_5\times\ZZ_3$ & $[[60,12,6]]$ & 8 & I, BS & 30 & 30 & 4 & 2 (6) & 2 & n/a\\
$D_4\times\ZZ_5$ & $[[80,8,10]]$ & 8 & I, BS & 40 & 40 & 4 & 2 (20) & 2 & 1 of 12 (18)\\
$S_3\times\ZZ_7$ & $[[84,4,11]]$ & 7 & I, BS & 42 & 42 & 4 & 2 (14) & 2 & 1 of 1\\
$D_4\times\ZZ_6$ & $[[96,8,12]]$ & 8 & I, BS & 48 & 48 & 4 & 2 (24) & 2 & 1 of 12 (39)\\
$S_3\times\ZZ_9$ & $[[108,4,13]]$ & 7 & BS [I: $\ge9$] & 54 & 54 & 4 & 2 (18) & 2 & 1 of 3\\
$D_4\times\ZZ_7$ & $[[112,4,13]]$ & 8 & BS [I: $\ge7$] & 56 & 56 & 4 & 2 (28) & 2 & 1 of 1\\
$\ZZ_9\rtimes_8\ZZ_{12}$ & $[[216,16,9]]$ & 7 & I, BS & 108 & 36 & 8 & 4 (12) & 2 & n/a\\
$\ZZ_9\rtimes_2\ZZ_{12}$ & $[[216,16,9]]^\ast$ & 8 & I & 648 & 12 & 8 & 4 (36) & 2 & 1 of 1\\
$\ZZ_9\rtimes_8\ZZ_{12}$ & $[[216,12,12]]$ & 7 & BS & 108 & 18 & 6 & 3 (54) & 2 & 1 of 2\\
$\ZZ_9\rtimes_8\ZZ_{12}$ & $[[216,12,12]]$ & 7 & BS [I: $\ge6$] & 108 & 18 & 6 & 3 (54) & 2 & 1 of 2\\
\bottomrule
\end{tabular}
\par\smallskip
\begin{minipage}{\linewidth}\footnotesize All codes are rigid ($|E|=|\Tr|$) except $^\ast$. $w$: check weight. Method: BS,
exact block-split enumeration; I, integer programming (in brackets, its lower bound at the time
limit). $|Ev|$: orbit size of the minimum-weight $Z$-logical $v$. $\mu=\mu_E(v)$. Base:
$\max_H\mu_H(v)$ over the maximal Abelian subgroups $H$, with the order of a best $H$. Orbits:
orbits of minimum-weight $Z$-logicals with $\gamma=2$ among those examined (in parentheses, all
that exist; n/a, only the logical of Table~\ref{tab:reeval} examined). $[[60,12,6]]$ has $d_X=6$,
$d_Z=7$. $^\ast|E|=648=6|\Tr|$ ($216$ twisted right translations, $324$ maps with an outer
$\varphi$, no exchange), $22$ maximal Abelian subgroups. The two $[[216,12,12]]$ rows are the code
of Table~\ref{tab:reeval} and the one from the search.
\end{minipage}
\end{table}

\begin{figure}[!t]
\centering
\includegraphics[width=0.6\textwidth]{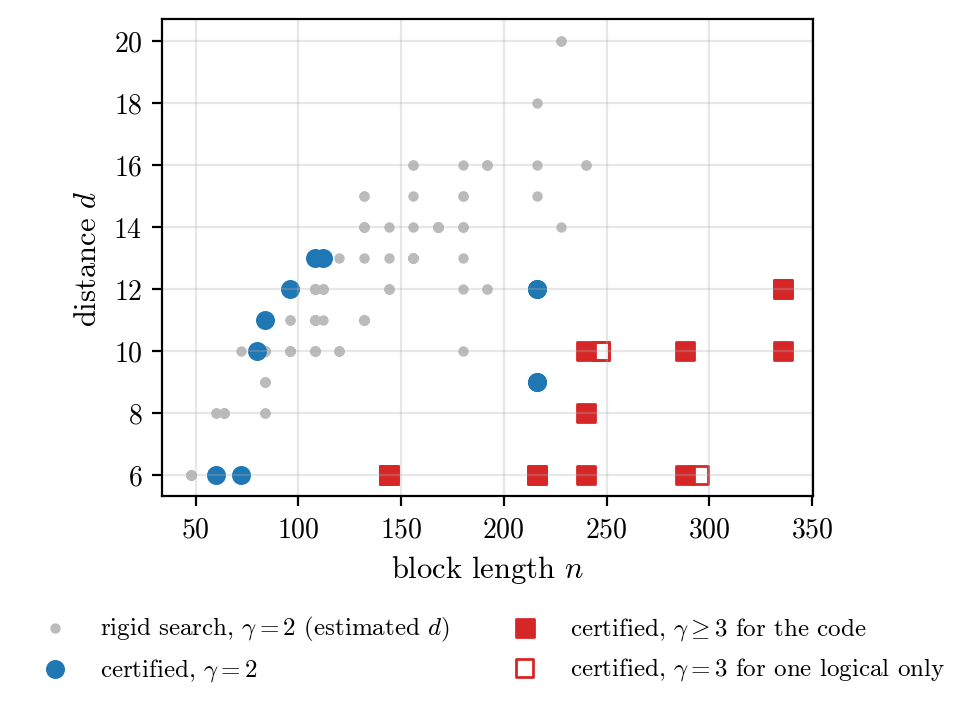}
\caption{Distance $d$ against block length $n$. Blue circles: certified gain-two codes
(Table~\ref{tab:family}). Red squares: certified gain-three codes (Table~\ref{tab:gain3}), filled
where the code-level gain is at least three and open where $\gamma=3$ holds only for one logical.
Grey dots: the $204$ rigid gain-two candidates of the search, whose distances are information-set
estimates and hence upper bounds. The certified distances are exact.}
\label{fig:codes}
\end{figure}

The gain is exactly two in every row, and no code of the search exceeds two. For $[[72,8,6]]$,
$[[84,4,11]]$, $[[108,4,13]]$ and $[[112,4,13]]$ the full gadget measures all $k$ logical qubits at
once and leaves a merged code with $k=0$, while every Abelian group reaches half of them. The
non-rigid $[[216,16,9]]$ code over $\ZZ_9\rtimes_2\ZZ_{12}$ shows that a certified gain of two does
not need rigidity. Its block-affine group is six times the translation group, yet no Abelian
subgroup of it measures more than four translates of the minimum-weight logical, against eight
for the full group.

\subsection{Gain three and more}\label{sec:gain3}

By Proposition~\ref{thm:cover}, a gain of three requires a group in which every Abelian subgroup
has index at least three. We searched the $56$ non-Abelian groups of order $21$ to $144$ with this
property, and verified every code with $\gamma\ge3$ independently (Appendix~\ref{app:search}).
There are $16$ such codes, listed in Table~\ref{tab:gain3}.

\begin{table}[!t]
\centering\small
\caption{Rigid two-block codes, all with $a$ central, with exact gain $\gamma\ge3$ over $E=\Tr$
and exact distance. Merged codes are in Table~\ref{tab:gain3_full}.}
\label{tab:gain3}
\setlength{\tabcolsep}{3.5pt}
\begin{tabular}{llclrccccccc}
\toprule
$G$ & $[[n,k,d]]$ & $w$ & method & $|E|$ & $[E:H_{\max}]$ & $c$ & orbits & $\mu$ & base & $\gamma$ & $\gamma_{\mathrm{code}}$\\
\midrule
$S_4\times\ZZ_7$ & $[[336,26,12]]$ & 8 & S & 168 & 6 & 3 & 2 of 5 & 12 & 4 & 3 & 3\\
$S_4\times\ZZ_7$ & $[[336,24,12]]$ & 8 & S & 168 & 6 & 3 & 2 of 3 & 12 & 4 & 3 & 3\\
$S_4\times\ZZ_7$ & $[[336,24,10]]$ & 8 & S & 168 & 6 & 3 & 2 of 2 & 12 & 4 & 3 & 3\\
$S_4\times\ZZ_6$ & $[[288,24,10]]$ & 8 & S & 288 & 6 & 3 & 1 of 1 & 12 & 4 & 3 & 3\\
$S_4\times\ZZ_5$ & $[[240,24,10]]$ & 8 & S & 240 & 6 & 3 & 2 of 16 & 12 & 4 & 3 & 2\\
$S_4\times\ZZ_5$ & $[[240,18,10]]$ & 8 & BS & 240 & 6 & 3 & 2 of 13 & 12 & 4 & 3 & 2\\
$\mathrm{SL}(2,3)\times\ZZ_5$ & $[[240,18,10]]$ & 8 & BS, S & 120 & 4 & 4 & 2 of 5 & 9 & 3 & 3 & 3\\
$S_4\times\ZZ_5$ & $[[240,24,8]]$ & 8 & BS, M & 120 & 6 & 3 & 1 of 1 & 12 & 4 & 3 & 3\\
$S_4\times\ZZ_5$ & $[[240,24,6]]$ & 8 & BS, M & 120 & 6 & 3 & 1 of 1 & 12 & 4 & 3 & 3\\
$\mathrm{SL}(2,3)\times\ZZ_6$ & $[[288,20,6]]$ & 7 & M & 144 & 4 & 4 & 1 of 1 & 10 & 3 & $\mathbf{10/3}$ & 10/3\\
$A_4\times\ZZ_{12}$ & $[[288,28,6]]^a$ & 7 & M & 144 & 3 & 3 & 1 of 2 & 12 & 4 & 3 & 2\\
$A_4\times\ZZ_9$ & $[[216,24,6]]$ & 7 & M & 108 & 3 & 3 & 1 of 1 & 12 & 4 & 3 & 3\\
$A_4\times\ZZ_9$ & $[[216,24,6]]$ & 7 & M & 108 & 3 & 3 & 1 of 1 & 12 & 4 & 3 & 3\\
$S_4\times\ZZ_3$ & $[[144,24,6]]$ & 8 & M & 144 & 6 & 3 & 1 of 4 & 12 & 4 & 3 & 3\\
$S_4\times\ZZ_3$ & $[[144,24,6]]$ & 8 & M & 72 & 6 & 3 & 2 of 5 & 12 & 4 & 3 & 3\\
$S_4\times\ZZ_3$ & $[[144,24,6]]$ & 8 & M & 72 & 6 & 3 & 2 of 5 & 12 & 4 & 3 & 3\\
\bottomrule
\end{tabular}
\par\smallskip
\begin{minipage}{\linewidth}\footnotesize Method: BS, block-split enumeration; M, meet in the middle; S, block split with
information-set coset scans (Appendix~\ref{app:dist}). $[E:H_{\max}]$: index of a largest Abelian
subgroup. $c$: cover bound $\min_Hc_H(v)$ (Proposition~\ref{thm:cover}). Orbits: orbits of
minimum-weight $Z$-logicals with the stated $\gamma$ among those found (the others have $\gamma$
between $1$ and $2$). $\mu$ and base refer to a best orbit.
$\gamma_{\mathrm{code}}=\max_v\mu_E(v)/\max_v\max_H\mu_H(v)$ over all orbits found. Other columns
as in Table~\ref{tab:family}. The $X$-type gain of the best orbit is two for $[[336,26,12]]$ and
$[[288,28,6]]$ and equals $\gamma$ for the others. $^a$$d_X=6$, $d_Z=8$.
\end{minipage}
\end{table}

For a best orbit of logicals, one merged code of the full gadget measures twelve logical
operators (nine or ten over $\mathrm{SL}(2,3)$) where every Abelian group of block-affine
automorphisms measures four (three), so the $[[336,24,12]]$ and $[[336,26,12]]$ codes read out
about half of their logical qubits at once. In $13$ of the $16$ codes the gain holds for the
code as a whole. Proposition~\ref{thm:cover} is tight for the $A_4$ and $S_4$ codes. Over $A_4\times\ZZ_m$ the gain three equals the index
$[E:H_{\max}]$. Over $S_4\times\ZZ_m$ the index is six, but an Abelian orbit has only three
distinct translates, and the gain is three. The $[[288,20,6]]$ code over
$\mathrm{SL}(2,3)\times\ZZ_6$ has $\gamma=10/3$, between three and its cover bound four.

\subsection{Choice of logical and search evidence}\label{sec:logical}\label{sec:search}

The gain is a property of the pair (code, logical): only $75$ of the $204$ rigid gain-two codes
of the search have $\gamma=2$ for every orbit examined, so the logical has to be chosen when the
gadget is designed. The search covered $30$ of $295$ queued groups and found rigid gain-two codes
with estimated distance up to $20$ (Fig.~\ref{fig:codes}, Appendix~\ref{app:search}), which we
treat as evidence and not as certified.

\FloatBarrier
\section{Fault tolerance}\label{sec:ft}

\subsection{Merged distance}\label{sec:merged}

A co-measurable set of size $m$ is gauged in one merged code with $O(d)$ rounds instead of
$O(md)$, and independence guarantees that the merged code has $k-m$ logical qubits and that all
$m$ outcomes are informative. Whether the merged code keeps the distance depends on the ancilla
graphs. For a graph $\Gamma$ with vertex set $V$ let $\delta U$ be the set of edges with exactly
one end in $U\subseteq V$, and let the Cheeger constant be
$h(\Gamma)=\min\{|\delta U|/|U|:\emptyset\neq U\subseteq V,\ |U|\le|V|/2\}$. The case $m=1$ of the
following lemma is Lemma~2 of Williamson and Yoder \cite{williamson2024} with $X$ and $Z$
exchanged, and the extension to several disjoint supports is in the spirit of the parallel
measurements of \cite{cowtan2025parallel} (proof in Appendix~\ref{app:proofs}).

\begin{lemma}[Merged distance]\label{lem:merged}
Let $Z_{v_1},\dots,Z_{v_m}$ be $Z$-logicals of a CSS code with pairwise disjoint supports, gauged
with connected graphs $\Gamma_i$ on $v_i$ with $h(\Gamma_i)\ge1$, with cycle checks spanning the
cycle space of every $\Gamma_i$, and with every $X$-check $r$ deformed by edge sets
$F_{r,i}\subseteq E(\Gamma_i)$ with boundary $\partial F_{r,i}=r\cap v_i$. Then the
merged code has $d'_Z\ge d_Z$ and $d'_X\ge d_X$.
\end{lemma}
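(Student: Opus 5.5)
We write the merged code explicitly and handle the two sectors separately, following Williamson and Yoder's argument for $m=1$ one graph at a time. The qubits are the original data qubits together with the edge qubits $E_i=E(\Gamma_i)$. The $Z$-checks are the original $Z$-checks, which are unchanged because gauging a $Z$-logical leaves the $Z$-stabilisers commuting with everything, together with the Gauss laws $G_u$. The $X$-checks are the deformed checks $\tilde r=X_r\prod_i X_{F_{r,i}}$ together with the cycle checks of each $\Gamma_i$.

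\emph{$X$-distance.} Let $\bar X'=(x,f_1,\dots,f_m)$ be an $X$-logical of the merged code.
\begin{enumerate}
\item Commutation with $G_u$ for $u\in v_i$ gives $x_u=\deg_{f_i}(u)\bmod 2$, that is, $x|_{v_i}=\partial f_i$. Commutation with the original $Z$-checks gives $x\in\ker H_Z$.
\item We clean each $f_i$ using the cycle checks and the vertex relation. Modulo cycle checks, $f_i$ is determined by its boundary. Since $x|_{v_i}=\partial f_i$ is a boundary, $|x\cap v_i|$ is even, so $x$ commutes with $Z_{v_i}$.
\item We choose $f_i$ of minimal weight with the given boundary. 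Every such representative satisfies $|x\cap v_i|\le 2|f_i|$, which is not enough on its own. The right tool is the Cheeger argument applied in the dual direction, as in the $Z$-sector below.
\item The cleaner route is to note that $x$ restricted to the original qubits is in $\ker H_Z$. If $x\in\rowsp H_X$, then multiplying by deformed checks $\tilde r$ reduces to $x=0$. Then $\partial f_i=0$, so each $f_i$ is a cycle, and the logical is a product of cycle checks, a contradiction.
\item Otherwise $x$ is a nontrivial $X$-logical of the original code, since it is independent modulo $\rowsp H_X$. Here one must rule out $x$ being a nontrivial product such as $X$ times stabilisers; that is fine. Then $|\bar X'|\ge|x|\ge d_X$.
\end{enumerate}
The step to check carefully is that reducing by $\tilde r$ changes only edges and stays consistent, which is a routine verification.

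\emph{$Z$-distance.} This is the main step, and we take it in three parts.
\begin{enumerate}
\item Let $\bar Z'=(z,e_1,\dots,e_m)$ be a $Z$-logical. Commutation with a cycle check says that $e_i$ has even overlap with every cycle of $\Gamma_i$. Hence $e_i$ is a coboundary: $e_i=\delta U_i$ for some $U_i\subseteq v_i$, determined up to complement.
\item Multiplying by $\prod_{u\in U_i}G_u$ removes $e_i$ and flips $z$ on $U_i$. Choosing between $U_i$ and $v_i\setminus U_i$ the one of size at most $|v_i|/2$, the hypothesis $h(\Gamma_i)\ge1$ gives $|\delta U_i|\ge|U_i|$. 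So the cleaned operator $z'=z+\sum_i \mathbf 1_{U_i}$ satisfies $|z'|\le|z|+\sum_i|U_i|\le|z|+\sum_i|e_i|=|\bar Z'|$.
\item After cleaning, commutation with $\tilde r$ reads $H_Xz'=0$, so $z'\in\ker H_X$.
\end{enumerate}

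It remains to show that $z'$ is not in $\rowsp H_Z+\operatorname{span}\{v_i\}$. If it were, then undoing the cleaning would show that $\bar Z'$ is a product of original $Z$-checks and Gauss laws, because $Z_{v_i}=\prod_{u\in v_i}G_u$. That would make $\bar Z'$ a merged stabiliser, a contradiction. Hence $z'$ is a nontrivial $Z$-logical of the original code, so $|z'|\ge d_Z$, and therefore $|\bar Z'|\ge d_Z$.

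The hardest point is the choice of the smaller side $U_i$. This choice must be made independently for each $i$, which is legitimate because the supports are disjoint and the graphs are separate, so the Gauss-law products for different $i$ commute and act on disjoint qubits.
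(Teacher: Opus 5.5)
Your proof is correct and follows the paper's argument. On the $Z$ side you use the same steps: the edge part is a coboundary because of the cycle checks, you take the smaller side $U_i$, apply $|U_i|\le|\delta U_i|$ from $h(\Gamma_i)\ge1$, clean by Gauss laws, and argue that $z'$ is non-trivial. On the $X$ side you likewise show $x_D\in\ker H_Z$ and $x_D\notin\rowsp H_X$, since otherwise the deformed and cycle checks would generate the operator. Your abandoned steps 2--3 on the $X$ side are unnecessary, because no expansion is needed there. Deriving $H_Xz'=0$ from the cleaned operator's commutation with $\tilde r$ is equivalent to the paper's direct parity identity $|\delta U\cap F|\equiv|U\cap\partial F|$.
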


Every gadget in this paper satisfies the hypotheses. Each ancilla graph is the first random
$4$-regular graph with $h\ge1$ (Appendix~\ref{app:circuits}), which for $|v|\le9$ is automatic
since $|\delta U|\ge4|U|-|U|(|U|-1)=|U|(5-|U|)\ge|U|$ whenever $|U|\le4$. Hence every merged code
in Tables~\ref{tab:family_full} and~\ref{tab:gain3_full} has $d'\ge d$. The lemma does not bound the fault
distance of the full protocol, whose time-like part relies on the $R=d$ merged rounds.

\subsection{Circuit-level simulations}\label{sec:numerics}

We simulate the full protocol (two base rounds, $R=d$ merged rounds, ancilla read-out, two base
rounds, data read-out) under circuit-level depolarising noise with a hook-aware parallel CNOT
schedule, and decode the $Z$ sector with Relay-BP \cite{muller2025relay}. A protocol fails if any
measured logical outcome is wrong (Appendix~\ref{app:circuits}). We simulate the
bivariate-bicycle $[[72,12,6]]$ code as the Abelian reference, $[[72,8,6]]$ and $[[216,16,9]]$
from Table~\ref{tab:family}, $[[240,18,10]]$ over $S_4\times\ZZ_5$ from Table~\ref{tab:gain3},
and $[[108,16,6]]$ over $S_3\times\ZZ_9$. For each code we compare the full gadget ($m$
operators, $R$ rounds) with the best Abelian gadget ($m_A=m/2$ operators, or $m/3$ for
$[[240,18,10]]$), with $m/m_A$ Abelian gadgets run in sequence, which fail with probability
$1-(1-p_A)^{m/m_A}$, and with $m$ single gadgets in sequence, which fail with probability
$1-(1-p_1)^m$ (Fig.~\ref{fig:num}). For the non-rigid $[[108,16,6]]$, Abelian groups of translations already
reach eight operators, so there the comparison is between gadget sizes.

\begin{figure}[!t]
\centering
\includegraphics[width=\textwidth]{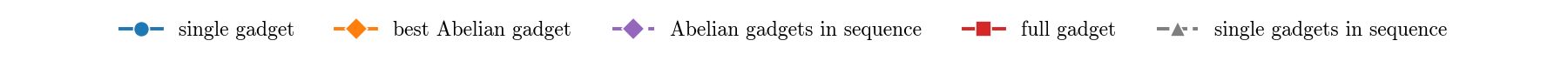}\\[2pt]
\includegraphics[width=0.325\textwidth]{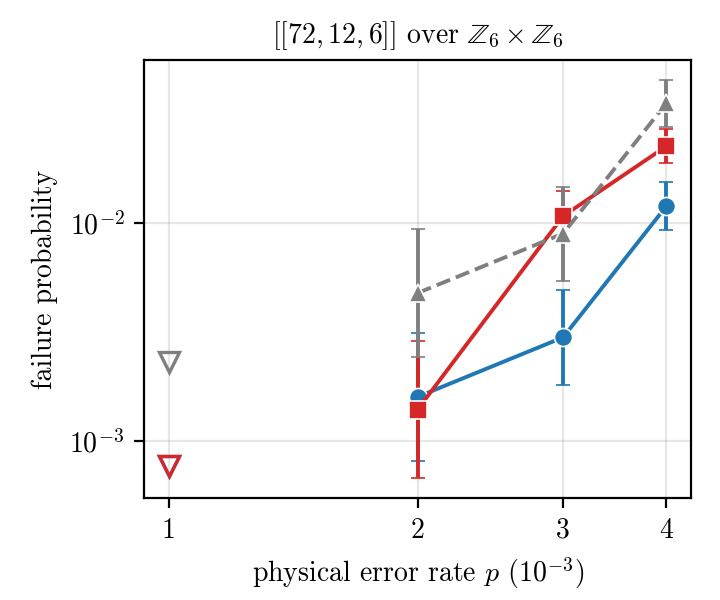}\hfill
\includegraphics[width=0.325\textwidth]{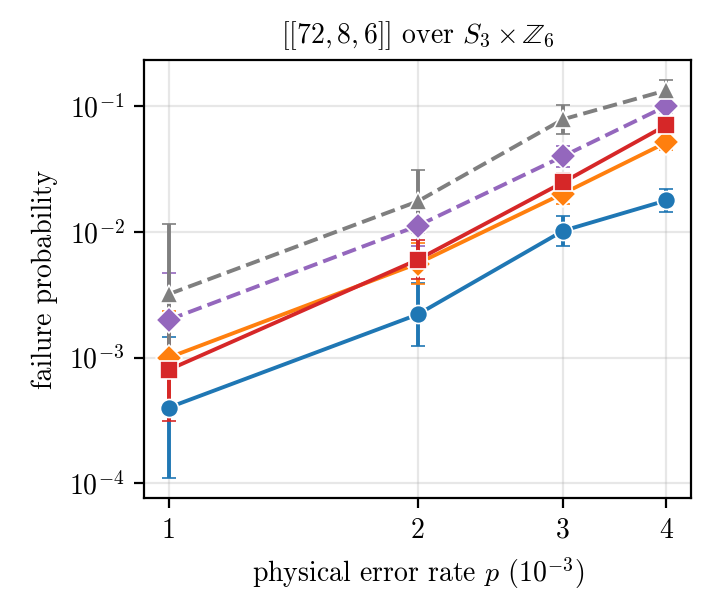}\hfill
\includegraphics[width=0.325\textwidth]{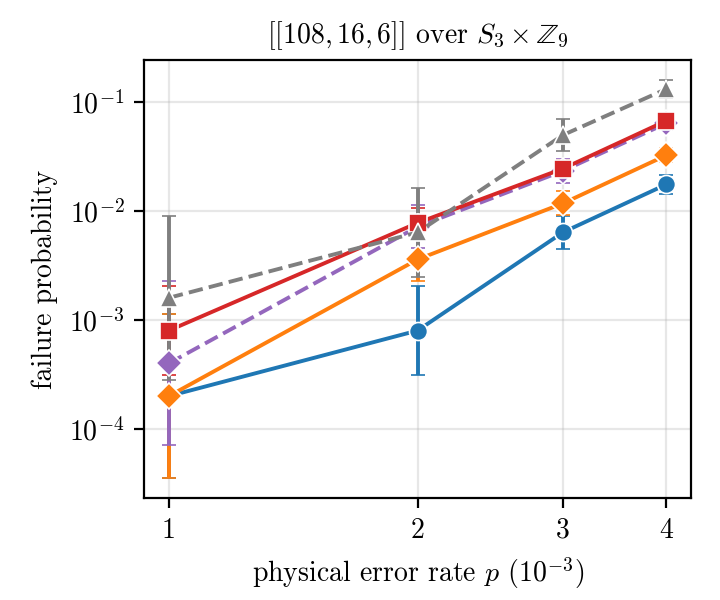}\\[2pt]
\includegraphics[width=0.325\textwidth]{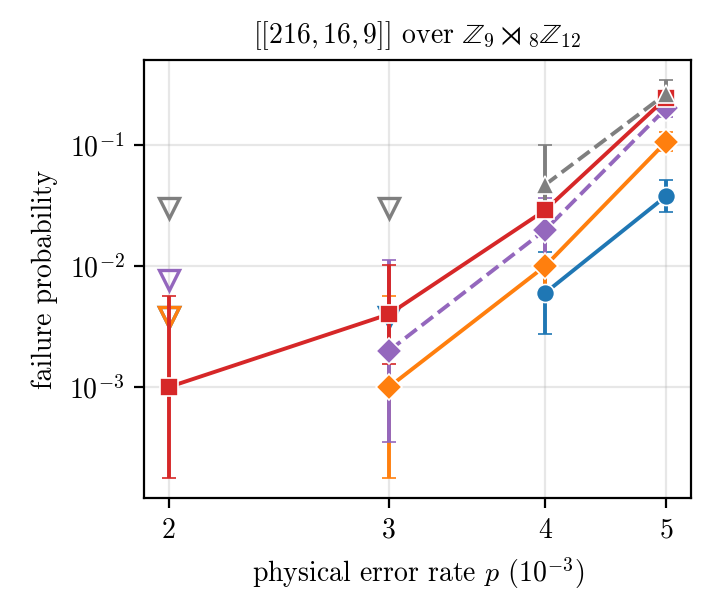}\hspace{0.02\textwidth}
\includegraphics[width=0.325\textwidth]{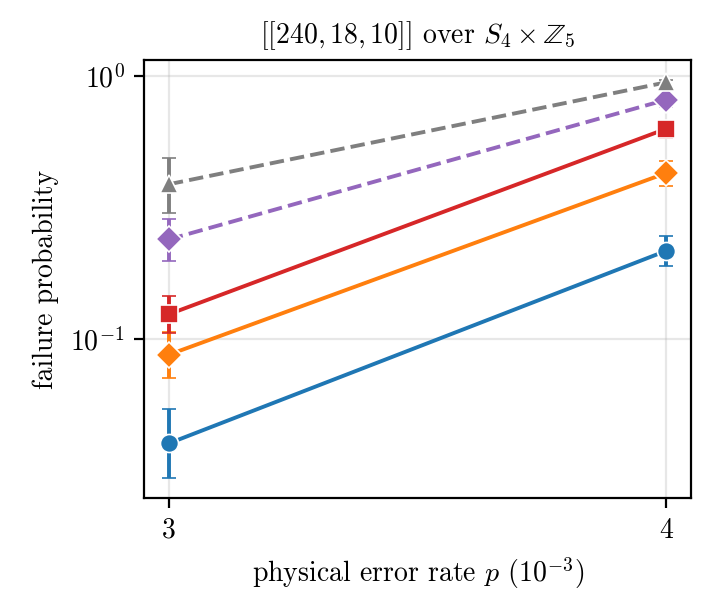}
\caption{Surgery failure probability against physical error rate, with Relay-BP decoding of the
$Z$ sector and hook-aware schedules (greedy for $[[216,16,9]]$). Each panel compares the full
gadget, which measures all its operators in $R=d$ merged rounds, with one Abelian gadget, single
gadgets and their sequences. The $[[72,12,6]]$ panel is the bivariate-bicycle code over
$\ZZ_6\times\ZZ_6$. Solid lines are measured and dashed lines derived ($1-(1-p)^j$). Error bars are $95\%$ Wilson intervals, and open
triangles mark the Wilson upper bound of points without failures.}
\label{fig:num}
\end{figure}

Whatever the error rate, the lifted gadget measures the same operators in three
($[[72,12,6]]$), eight or twelve ($[[240,18,10]]$) times fewer merged rounds than sequential
single-operator surgery, and the full gadget measures in $R$ rounds what $m/m_A$ Abelian gadgets
measure in $(m/m_A)R$. Figure~\ref{fig:ratios} shows that it does so with no loss of
reliability within statistical error. Against the Abelian gadgets run in sequence it fails $1.3$
to $1.9$ times less often on $[[72,8,6]]$ and $[[240,18,10]]$, and on $[[108,16,6]]$ and
$[[216,16,9]]$ the two protocols are indistinguishable. Against single gadgets in sequence the
point estimate favours the full gadget except at two points, both with intervals that include
one.

\begin{figure}[!t]
\centering
\includegraphics[width=0.8\textwidth]{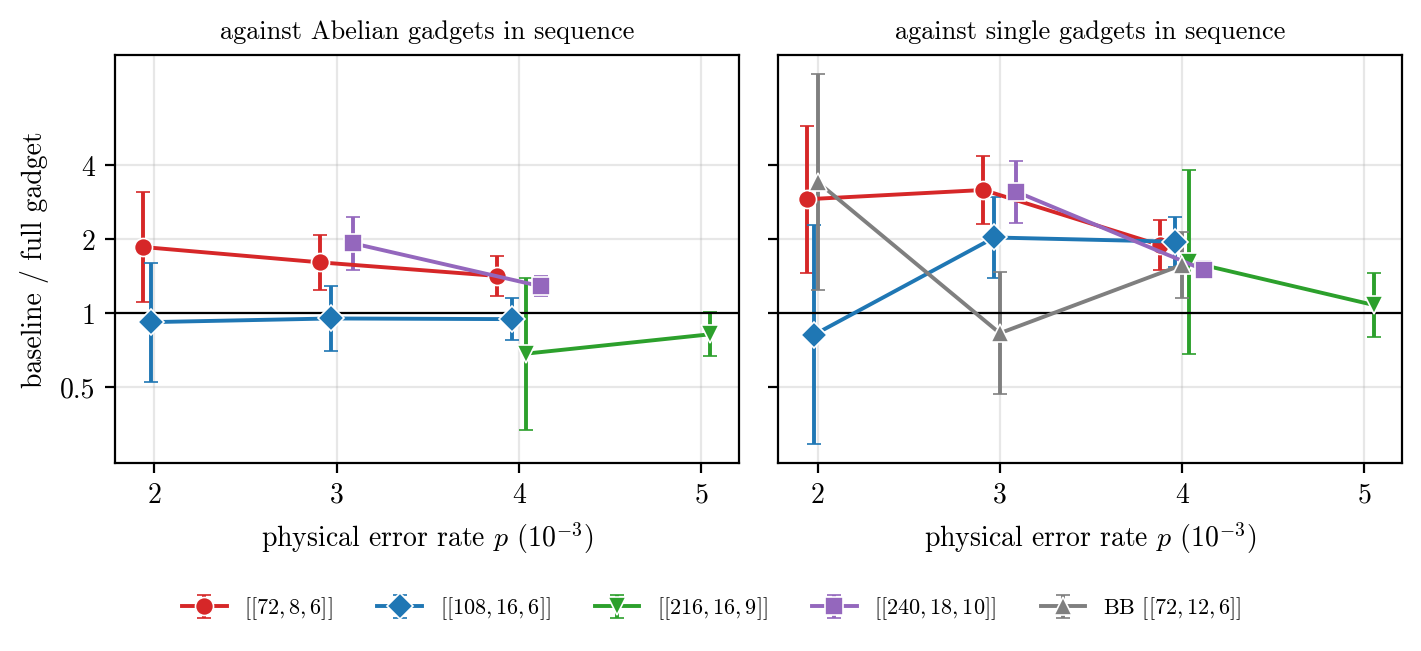}
\caption{Ratio of a baseline's failure probability to that of the full gadget, against Abelian
gadgets in sequence (left) and single gadgets in sequence (right). Error bars are $95\%$
delta-method intervals, and values above the line at one favour the full gadget. Only resolved
points are shown ($p\ge2\times10^{-3}$, and $p\ge4\times10^{-3}$ for $[[216,16,9]]$). The values
are listed in Table~\ref{tab:ratios} (Appendix~\ref{app:circuits}).}
\label{fig:ratios}
\end{figure}

The CNOT schedule matters about as much as the gadget: with a greedy edge colouring the
$[[72,8,6]]$ memory experiment fails $4.7$ times more often at $p=2\times10^{-3}$. All comparisons
are internal to one circuit builder and one decoder (Appendix~\ref{app:circuits}).

\FloatBarrier
\section{Discussion}\label{sec:discussion}

Lifted surgery carries over to non-Abelian two-block codes, but whether the non-Abelian part of
the symmetry helps depends on the baseline. Against the full block-affine group, most of the
gains seen against translations alone vanish, because twisted translations and outer maps supply
Abelian groups that the translation baseline missed. A certified gain of two remains in ten rigid
codes and one non-rigid code, and the coset-cover bound explains why two is the ceiling whenever
an Abelian subgroup has index two. In groups built from $A_4$, $S_4$ and $\mathrm{SL}(2,3)$, where
no such subgroup exists, the gain of a best logical reaches three in $15$ certified codes and
$10/3$ in one. We chose our codes for symmetry rather than rate: two-block codes over the groups of
Mitten codes, such as ours over $\ZZ_9\rtimes_8\ZZ_{12}$, carry $12$ to $16$ logical qubits where
Mitten's $[[540,108,18]]$ code over that group carries $108$ (Appendix~\ref{app:search}).

Our claims refer to the block-affine automorphism group, and since we do not compute the full
permutation automorphism group, code automorphisms of no group-theoretic origin are not excluded.
The gain belongs to a chosen logical, and three of the $16$ gain-three codes have gain two at
code level. The measured operators are fixed by the symmetry, so the gadget helps to the extent
that a computation needs them. The gadget itself does not use the symmetry: it attaches an
independent ancilla graph to each disjoint support, whereas parallel and addressable surgery
schemes measure operators whose supports overlap
\cite{cowtan2025parallel,he2025extractors,berent2026lifted}. In the simulations, all resolved
differences are at error rates close to the pseudo-thresholds of the codes, two codes have only
$1000$ shots per point, the derived sequential baselines are optimistic, and all runs are
$Z$-basis experiments (Appendix~\ref{app:circuits}). The round count, three to twelve times below
that of sequential surgery, does not depend on the decoder.

It is open whether the gain grows further with the cover bound and whether gains of three
persist at distances well above twelve. Better schedules, more shots or importance sampling at
lower error rates, the fault distance of the merged protocol and $X$-basis experiments are needed
to show that the advantage persists where the codes would be used. The largest set of disjoint,
independent minimum-weight logicals found without any symmetry, measured with the parallel
gadgets of \cite{cowtan2025parallel}, would be the natural symmetry-free baseline, which we have
not computed. An equivariant ancilla, with identical graphs and a decoder that reuses one
template, is what would make the symmetry operationally useful, and constant-time gadgets
\cite{cowtan2025fast,chang2026constant} should compose with the symmetry parallelism, although we have not constructed an equivariant
hypergraph ancilla.

\section*{Code availability}

\texttt{liftsurg} (Python; numpy, networkx, stim, scipy, relay-bp, ldpc) contains all
constructions, the automorphism searches, the exact co-measurability, enumeration and
integer-programming routines, the circuit builder, the search scripts and the result files, with
a test suite covering every construction. It is complementary to SSIP \cite{cowtan2024ssip},
which automates single merges at the check-matrix level.

\FloatBarrier
\appendix

\section{Twist group and structure}\label{app:twist}

The exchange and inversion maps act as follows.

\begin{lemma}[Exchange and inversion]\label{lem:JIW}
For every $G$, $J$ maps $\Code(a,b)$ onto $\Code(b^*,a^*)$ check to check. If $a$ and $b$ are
central in $\FG$, then $I$ maps $\Code(a,b)$ onto $\Code(a^*,b^*)$ and $W$ maps it onto
$\Code(b,a)$, again check to check.
\end{lemma}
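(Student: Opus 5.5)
The plan is to check everything at the level of individual check rows, using the explicit row formulas of \eqref{eq:rowspaces}. The $X$-check indexed by $r$ is $(a^*r,\,rb^*)$ and the $Z$-check indexed by $t$ is $(tb,\,at)$. Viewed as subsets of $G\sqcup G$, their supports are $(A^{-1}r,\,rB^{-1})$ and $(tB,\,At)$. For each of $J$, $I$ and $W$ we will compute the image of each such support and show it is exactly one check support of the claimed target code, with the same type. The map is a permutation of the qubits and the checks are indexed bijectively by $G$, so "check to check" then holds, and the image code equals the target code.

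\emph{The map $J$.} We have $J(x,y)=W(I(x,y))=(y^{-1},x^{-1})$. On supports, $I$ inverts elementwise in each block and $W$ swaps the blocks. The $X$-check support $(A^{-1}r,\,rB^{-1})$ is therefore sent to $\bigl((rB^{-1})^{-1},(A^{-1}r)^{-1}\bigr)=(B\,r^{-1},\,r^{-1}A)$. In $\Code(b^*,a^*)$ the new $a$ is $b^*$ with support $B^{-1}$, and the new $b$ is $a^*$ with support $A^{-1}$. Its $Z$-check indexed by $t$ has support $(t\cdot\supp(a^*),\,\supp(b^*)\,t)=(tA^{-1},\,B^{-1}t)$, and its $X$-check indexed by $r'$ has support $(B\,r',\,r'A)$. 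The image we computed is the latter with $r'=r^{-1}$, so $X$-checks go to $X$-checks.

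For the $Z$-checks, $(tB,\,At)$ is sent to $(t^{-1}A^{-1},\,B^{-1}t^{-1})$. This is the $Z$-check of the new code indexed by $t^{-1}$. No commutativity is used anywhere, which is why the statement about $J$ holds for every $G$.

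\emph{The maps $I$ and $W$ with $a,b$ central.} If $a$ is central in $\FG$ then $\Lm(a)=\Rm(a)$, so $Ag=gA$ for all $g$; likewise $Bg=gB$. The same holds for $A^{-1}$ and $B^{-1}$. Centrality lets us move $r$ and $t$ freely across the supports, which is the extra input needed here.

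For $I$, the $X$-support $(A^{-1}r,\,rB^{-1})$ goes to $(r^{-1}A,\,Br^{-1})$. By centrality this equals $(A\,r^{-1},\,r^{-1}B)$, which is the $X$-check $r^{-1}$ of $\Code(a^*,b^*)$, whose supports are $A^{-1}$ and $B^{-1}$. The $Z$-support $(tB,\,At)$ goes to $(B^{-1}t^{-1},\,t^{-1}A^{-1})=(t^{-1}B^{-1},\,A^{-1}t^{-1})$, which is the $Z$-check $t^{-1}$ of the same code.

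For $W$, the $X$-support $(A^{-1}r,\,rB^{-1})$ goes to $(rB^{-1},\,A^{-1}r)=(B^{-1}r,\,rA^{-1})$, which is the $X$-check $r$ of $\Code(b,a)$. The $Z$-support $(tB,\,At)$ goes to $(At,\,tB)$, which is the $Z$-check $t$ of $\Code(b,a)$.

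Alternatively, $W=J\circ I$ up to ordering, since $I$ and $W$ commute and are involutions. So $W$ could be deduced from the $J$ case together with the $I$ case applied to $\Code(a^*,b^*)$, whose generators are again central.

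The main obstacle is bookkeeping rather than mathematics. One has to keep straight which of $a,b$ and $a^*,b^*$ plays the role of the new "$a$" in the target code. One also has to make sure the image lands on the correct check type and uses the correct side of multiplication, left for the first block of $H_X$ and right for the second. It would be worth double-checking the conventions of \eqref{eq:rowspaces} at the outset on one explicit row, namely that row $r$ of $\Lm(a)$ has support $A^{-1}r$.
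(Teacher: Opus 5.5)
Your proof is correct and is essentially the paper's argument. The paper computes the images directly in $\FG$ for arbitrary $s,t$, e.g.\ $J(a^*s,sb^*)=(bs^*,s^*a)$ and $J(tb,at)=(t^*a^*,b^*t^*)$. You do the same computation on the supports of the rows indexed by group elements: your relabellings $r\mapsto r^{-1}$, $t\mapsto t^{-1}$ are the paper's $s\mapsto s^*$, $t\mapsto t^*$, and centrality is used in the same way for $I$ and $W$.

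One small slip is in your optional closing remark. The factorisation that needs the intermediate generators to be central is $W=I\circ J$, with $I$ applied to $\Code(b^*,a^*)$. With $W=J\circ I$, the second step applies $J$ to $\Code(a^*,b^*)$ and needs no centrality. Your direct computation is unaffected.
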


The twist group of a support has a simple structure, which also makes the translation group
depend on the presentation of the code.

\begin{theorem}[Twist group]\label{thm:twist}
Let $X\subseteq G$.
\begin{enumerate}
\item[(i)] $T(X)$ is a subgroup of $G\times G$, and its intersection with the diagonal is
$\{(g,g):g\in\Cz(X)\}$.
\item[(ii)] If $\{w:wX=X\}$ and $\{w:Xw=X\}$ are trivial, then $T(X)=\{(h,\tau(h)):h\in N(X)\}$ for an injective
homomorphism $\tau:N(X)\to G$ whose fixed-point set is $\Cz(X)$.
\item[(iii)] Under the hypothesis of (ii), $T(X)$ is larger than its diagonal part iff
$N(X)\ne\Cz(X)$ iff there are $v_1\ne v_2$ with $Xv_1=v_2X$; this is decidable from $X$ alone
with $O(|G||X|^2)$ set operations.
\item[(iv)] If $w^{-1}Xh=z\,h\,w^{-1}X$ with $z\in Z(G)$, then $(h,\,z\,whw^{-1})\in T(X)$; in
particular $\{(h,whw^{-1}):h\in\Cz(w^{-1}X)\}\le T(X)$.
\item[(v)] If $X$ is a union of conjugacy classes, then $N(X)=\Cz(X)=G$.
\end{enumerate}
\end{theorem}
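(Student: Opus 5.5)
The plan is to take the five parts in order, working throughout with the defining relation $Xh=kX$ and the conventions $\Cz(X)=\{g:gX=Xg\}$ and $N(X)=\{h:\exists k,\,Xh=kX\}$.

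For (i), the group property follows directly from the definition. The identity pair $(e,e)$ lies in $T(X)$. If $Xh=kX$ and $Xh'=k'X$, then
\[
Xhh'=kXh'=kk'X,
\]
so $T(X)$ is closed under $(h,k)(h',k')=(hh',kk')$. Since $G\times G$ is finite, closure under products already makes $T(X)$ a subgroup. For the diagonal, $(g,g)\in T(X)$ means exactly $Xg=gX$, which is the condition $g\in\Cz(X)$.

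For (ii), I will show that the second coordinate is determined by the first. Suppose $Xh=kX=k'X$. Then $k'^{-1}kX=X$, so $k'^{-1}k$ lies in the left stabiliser of $X$, which is trivial by hypothesis, and hence $k=k'$. This defines a map $\tau:N(X)\to G$ with $Xh=\tau(h)X$. Because $T(X)$ is a subgroup, its first projection $N(X)$ is a subgroup and $\tau$ is a homomorphism. For injectivity, $\tau(h)=e$ gives $Xh=X$, and the trivial right stabiliser forces $h=e$. The fixed points of $\tau$ are the $h$ with $Xh=hX$, which is exactly $\Cz(X)$ by (i).

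For (iii), the first equivalence is immediate from (ii): $T(X)$ strictly contains its diagonal part iff $\tau$ has a non-fixed point. The second equivalence unpacks to the existence of $h=v_1\ne v_2=k$ with $Xv_1=v_2X$. For decidability, I fix $x_0\in X$. Any solution of $Xv_1=v_2X$ has $x_0v_1=v_2x$ for some $x\in X$, so $v_2=x_0v_1x^{-1}$. This leaves $|G||X|$ candidate pairs, and each is checked by comparing two sets of size $|X|$. The result is $O(|G||X|^2)$ operations, assuming hashed set comparison. This counting step is the only place where I expect any care to be needed.

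For (iv), I start from $w^{-1}Xh=zhw^{-1}X$ and left-multiply by $w$. Since $z$ is central, this gives
\[
Xh=z\,whw^{-1}X,
\]
so $(h,zwhw^{-1})\in T(X)$. For the particular case, take $h\in\Cz(w^{-1}X)$ and $z=e$.

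For (v), if $X$ is a union of conjugacy classes then $gXg^{-1}=X$ for every $g\in G$, so $gX=Xg$. Hence $\Cz(X)=G$, and since $\Cz(X)\subseteq N(X)\subseteq G$, also $N(X)=G$.
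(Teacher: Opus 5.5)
Your proposal is correct and follows the paper's proof essentially step by step. In both, $T(X)$ is closed under products, $\tau$ is well defined because the left stabiliser is trivial and injective because the right stabiliser is trivial, a fixed $x_0\in X$ leaves $|X|$ candidates for the partner of each $v_1$, (iv) is left multiplication by $w$, and (v) is $Xh=h(h^{-1}Xh)$. The only differences are cosmetic: you invoke finiteness of $G\times G$ to pass from closure to subgroup, which the paper leaves implicit, and you locate the candidates via $x_0v_1\in v_2X$ rather than via $kx_0\in Xh$.
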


\begin{corollary}[Structure]\label{cor:structure}
Assume (H) and let $G$ be non-Abelian. The Tanner-level block-affine automorphisms form a group
$\Aut^{\mathrm T}_{\mathrm{ba}}(\Code)$. The type map to $\ZZ_2\times\ZZ_2$ (exchange, sign) is a
homomorphism, and its kernel maps onto a subgroup $O(\Code)\le\Out(G)$ by
$\sigma^+\mapsto[\varphi]$ with kernel the twist group of the code,
\[
\Tw(\Code)=\big\{[g_1,h_1\,|\,g_2,h_2]:(g_1,g_2)\in T(A),\ (h_1,h_2)\in T(B)\big\},\qquad
|\Tw(\Code)|=\frac{|T(A)|\,|T(B)|}{|Z(G)|}.
\]
Hence $|\Aut^{\mathrm T}_{\mathrm{ba}}(\Code)|=|\Tw(\Code)|\cdot|O(\Code)|\cdot\#\{\text{types
realised}\}$. The translation group
$\Tr(\Code)=\{\lambda_g\rho_h:g\in\Cz(a),h\in\Cz(b)\}$ is the diagonal part, of order
$|\Cz(a)||\Cz(b)|/|Z(G)|$.
\end{corollary}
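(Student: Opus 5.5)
The statement to prove is Corollary~\ref{cor:structure}. The plan is to assemble it from Theorem~\ref{thm:class} and Theorem~\ref{thm:twist}.

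\emph{Group structure and type map.} Under (H), Tanner-level block-affine maps are closed under composition and inverses, because each one permutes the $X$-check supports and the $Z$-check supports. Composites of block maps are again block maps, since $\Hol(G)\cup\Hol(G)\iota$ is a group. The type (exchange bit, sign) is multiplicative. Exchange bits add mod $2$. Signs multiply, because composing sign $-$ maps gives $\iota\circ\iota=\mathrm{id}$ up to holomorph factors. Theorem~\ref{thm:class}(e) guarantees that both blocks carry a common sign when $G$ is non-Abelian, so the type is well defined. This makes the type map a homomorphism to $\ZZ_2\times\ZZ_2$.

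\emph{The map to $\Out(G)$ and its kernel.} The kernel of the type map consists of the maps $\sigma^+$ without exchange. By Theorem~\ref{thm:class}(b) each has $\varphi_2=c_k\varphi_1$, so after reparametrising, a common $\varphi$ whose class in $\Out(G)$ is well defined. The assignment $\sigma^+\mapsto[\varphi]$ is a homomorphism because composing block maps composes the automorphisms modulo inner ones. Its image is a subgroup, which we call $O(\Code)$. A map lies in its kernel iff $\varphi$ can be taken to be $\mathrm{id}$. In that case \eqref{eq:P} reads $(g_1,g_2)\in T(A)$ and $(h_1,h_2)\in T(B)$, which is exactly $\Tw(\Code)$.

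\emph{Counting $\Tw(\Code)$.} Map $T(A)\times T(B)$ onto $\Tw(\Code)$. Two parameter tuples give the same permutation $(x,y)\mapsto(g_1xh_1,g_2yh_2)$ iff they differ by $(g_i,h_i)\mapsto(g_iz,z^{-1}h_i)$ with one and the same $z$ on both blocks. That $z$ must be central for the identity $g_1xh_1=g_1'xh_1'$ to hold for all $x$. Pinning both blocks down to a single common $z$ is the step I expect to need care. I would argue it directly from equality on each block: from $g_1xh_1=g_1'xh_1'$ for all $x$, setting $x=e$ gives $g_1'^{-1}g_1=h_1'h_1^{-1}=:z_1$, and then $z_1x=xz_1$ forces $z_1\in Z(G)$; similarly $z_2$ on the second block. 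Next I would check whether the tuple with $z_1\neq z_2$ still lies in $T(A)\times T(B)$. The conditions $Ah=kA$ are invariant under simultaneous central multiplication, so the fibre is $Z(G)\times Z(G)$ acting. Reconciling this with the claimed $|Z(G)|$ requires noting that the parametrisation in the statement fixes a single pair $(g_1,g_2)$ from $T(A)$ and $(h_1,h_2)$ from $T(B)$. The relevant relabelling $(g_1,g_2,h_1,h_2)\mapsto(g_1z,g_2z,z^{-1}h_1,z^{-1}h_2)$ preserves $T(A)\times T(B)$ only for a common $z$, while a block-dependent $z$ would move the pair out of $T(A)$ by (H) and Theorem~\ref{thm:twist}(ii). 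The fibres therefore have size exactly $|Z(G)|$, which I expect to be the main obstacle to state cleanly. The order formula for $\Aut^{\mathrm T}_{\mathrm{ba}}$ then follows from the two exact sequences.

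\emph{Diagonal part.} By Theorem~\ref{thm:twist}(i), the diagonal of $T(A)$ is $\{(g,g):g\in\Cz(A)\}$, and likewise for $B$. Hence the diagonal part of $\Tw(\Code)$ is $\{\lambda_g\rho_h\}$ with $g\in\Cz(a)$ and $h\in\Cz(b)$, i.e.\ $\Tr(\Code)$. The same central-fibre count gives the order $|\Cz(a)||\Cz(b)|/|Z(G)|$.
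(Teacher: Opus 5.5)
Your proposal is correct and follows the paper's proof: an inner $\varphi$ is reparametrised to $\mathrm{id}$ so that \eqref{eq:P} becomes $(g_1,g_2)\in T(A)$, $(h_1,h_2)\in T(B)$. The fibre of $T(A)\times T(B)\to\Tw(\Code)$ then consists of blockwise central pairs $(z_1,z_2)$ with $Az_1=z_2A$, which (H) forces to satisfy $z_1=z_2$, giving $|Z(G)|$.

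Two points need tidying. First, equality of the permutations only forces $z_1,z_2$ to be central independently, so your opening claim of a single common $z$ is wrong; it is membership in $T(A)$ that makes them equal. Second, for the type map you should cite, as the paper does, that $\Hol(G)\cap\Hol(G)\iota=\emptyset$ for non-Abelian $G$; without it the sign of a single block map is not well defined and Theorem~\ref{thm:class}(e) cannot even be stated.
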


Part (iv) of Theorem~\ref{thm:twist} says that a twist of the form $(h,whw^{-1})$ is an ordinary
right translation of the equivalent presentation $\Code(a,w^{-1}b)$, to which $\Code(a,b)$ is
carried by the translation $y\mapsto yw$ of the second block (Lemma~\ref{lem:transport}). An
Abelian baseline built from $\Cz(a)$ and $\Cz(b)$ therefore depends on the chosen presentation. By
part (v), a conjugation-invariant $a$ has no left twists, and in all our codes the twists act on
the right.

\section{Twisted translations and re-evaluated codes}\label{app:reeval}

Twisted translations can undo the mirror in two ways. Let $S=\lambda_{\Cz(a)}$ be the left action
and $\rho_{(h,k)}:(x,y)\mapsto(xh,yk)$, $(h,k)\in T(B)$, a twisted right translation.

\begin{lemma}[Orbit transfer]\label{lem:transfer}
Let $H\le E$ be Abelian with $Hv\supseteq Sv$; then $\mu_H(v)\ge\mu_S(v)$. In particular, let
$K\le\Cz(a)$ and $R\le T(B)$ be Abelian and suppose there is $\theta:R\to\Cz(a)$ with
$\rho_rv=\lambda_{\theta(r)}v$ for all $r\in R$ and $K\theta(R)=\Cz(a)$. Then
$H=\lambda_K\rho_R$ is Abelian and $Hv=Sv$.
\end{lemma}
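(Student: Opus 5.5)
The first claim, $\mu_H(v)\ge\mu_S(v)$ whenever $Hv\supseteq Sv$, should follow directly from Definition~\ref{def:gain}. Co-measurability of a set of translates depends only on the set of logicals: their supports must be pairwise disjoint and they must be independent modulo $\rowsp H_Z$. It does not depend on which group elements produced them. A co-measurable set $\{\sigma_1v,\dots,\sigma_mv\}$ with $\sigma_i\in S$ lies in $Sv\subseteq Hv$, so each $\sigma_iv=\eta_iv$ for some $\eta_i\in H$. The same set is therefore a co-measurable set of $H$-translates, and $\mu_H(v)\ge m$. Taking $m=\mu_S(v)$ gives the claim. Nothing about commutativity enters here.

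For the second part I would first show that $H=\lambda_K\rho_R$ is an Abelian group of code automorphisms. Each $\lambda_g$ with $g\in\Cz(a)$ is an automorphism by Proposition~\ref{prop:aut}. Each $\rho_{(h,k)}$ with $(h,k)\in T(B)$ is a Tanner-level automorphism by Theorem~\ref{thm:class}(b), taking $\varphi=\mathrm{id}$ and $g_1=g_2=e$, since $Bh=kB$ is exactly \eqref{eq:P}. Left and right multiplications commute on each block, so every $\lambda_g$ commutes with every $\rho_{(h,k)}$. The map $(h,k)\mapsto\rho_{(h,k)}$ is a homomorphism from $R$, because $\rho_{(h,k)}\rho_{(h',k')}=\rho_{(h'h,k'k)}$ under the right-action convention. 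Since $R$ is Abelian, the maps $\rho_r$ commute with each other, and since $K$ is Abelian, so do the maps $\lambda_g$. Hence the set $\lambda_K\rho_R$ is closed under composition and inverses, and it is an Abelian subgroup of $E$.

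Next I would show $Hv=Sv$. Take $\lambda_g\rho_r v$ with $g\in K$, $r\in R$. By hypothesis $\rho_rv=\lambda_{\theta(r)}v$, so $\lambda_g\rho_rv=\lambda_{g\theta(r)}v$. This lies in $Sv$ because $K\subseteq\Cz(a)$ and $\theta(r)\in\Cz(a)$, so $Hv\subseteq Sv$. Conversely, any $s\in\Cz(a)$ can be written $s=g\theta(r)$ with $g\in K$, $r\in R$ because $K\theta(R)=\Cz(a)$. Then $\lambda_sv=\lambda_g\lambda_{\theta(r)}v=\lambda_g\rho_rv\in Hv$. Together these give $Hv=Sv$, and the first part yields $\mu_H(v)\ge\mu_S(v)$.

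The main point to check is the bookkeeping of left versus right action. One must confirm that $\rho_{(h,k)}$ composes as a group action, possibly after replacing $R$ by its image under inversion, which does not affect commutativity. One must also confirm that $K\theta(R)$ is read as a set of products, not a subgroup. Neither requires $\theta$ to be a homomorphism. Beyond this I expect no real obstacle.
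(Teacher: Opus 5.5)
Your proposal is correct and follows the paper's proof. Both arguments use the same three steps: $\mu$ can only grow when the orbit is enlarged; left and right multiplications commute elementwise, which makes $\lambda_K\rho_R$ Abelian; and $\lambda_g\rho_r v=\lambda_{g\theta(r)}v$ runs over $\lambda_{\Cz(a)}v$ because $K\theta(R)=\Cz(a)$. Your extra bookkeeping makes explicit what the paper leaves implicit: $r\mapsto\rho_r$ is an anti-homomorphism, which is harmless since $R$ is Abelian, and $\rho_r\in E$ already follows from Lemma~\ref{lem:transport} without assuming (H).
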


\begin{lemma}[Twisted mirror]\label{lem:tmirror}
Let $K\le\Cz(a)$ and $R\le T(B)$ be Abelian, $T\subseteq\lambda_Kv$ co-measurable, and let
$\rho\in\rho_R$ satisfy the mirror conditions (i) and (ii) for $T$. Then
$\mu_{\lambda_K\rho_R}(v)\ge2|T|$, and likewise with the roles of $\lambda_K$ and $\rho_R$
exchanged. Consequently, if $\mu_S(v)=2\mu_{\lambda_K}(v)$ and some twist mirrors a maximum
co-measurable $T\subseteq\lambda_Kv$, the Abelian group $\lambda_K\rho_R\le E$ reaches
$\mu_S(v)$.
\end{lemma}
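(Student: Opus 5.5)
The statement to prove is the Twisted mirror lemma (Lemma~\ref{lem:tmirror}). The plan is to reduce it to the Mirror doubling lemma (Lemma~\ref{lem:mirror}) applied inside the Abelian group $H=\lambda_K\rho_R$.

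\emph{Step 1: $H$ is an Abelian group of code automorphisms.} Left and right multiplication commute on each block. A twisted right translation $\rho_{(h,k)}:(x,y)\mapsto(xh,yk)$ commutes with the left translation $\lambda_g:(x,y)\mapsto(gx,gy)$, since $g(xh)=(gx)h$. Hence $\lambda_K$ and $\rho_R$ commute elementwise.

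\begin{itemize}
\item $\lambda_K$ is Abelian because $K$ is. Its elements are code automorphisms by Proposition~\ref{prop:aut}, since $K\le\Cz(a)$.
\item $\rho_R$ is Abelian because $R\le G\times G$ is Abelian, as $\rho_{(h,k)}\rho_{(h',k')}=\rho_{(h'h,k'k)}$.
\item Each $\rho_{(h,k)}$ is a code automorphism by Theorem~\ref{thm:class}(b). With $\varphi=\mathrm{id}$ and $g_1=g_2=e$, the conditions \eqref{eq:P} read $A=A$ and $B=kBh^{-1}$, i.e.\ $(h,k)\in T(B)$. Equivalently, Lemma~\ref{lem:transport} gives $b'=k^{-1}bh=b$ as elements of $\FG$, so Theorem~\ref{thm:class}(a) applies without needing (H).
\end{itemize}

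Hence $H=\lambda_K\rho_R$ is a product of commuting Abelian subgroups of $E$, so it is an Abelian subgroup of $E$.

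\emph{Step 2: apply the mirror lemma inside $H$.} Take the group of automorphisms in Lemma~\ref{lem:mirror} to be $H$ itself, and take $s=\rho\in\rho_R\le H$. By hypothesis $\rho$ satisfies conditions (i) and (ii) for the co-measurable set $T$. Lemma~\ref{lem:mirror} then gives that $T\cup\rho T$ is co-measurable.

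The elements of $T\cup\rho T$ are translates of $v$ by elements of $\lambda_K\cup\rho\lambda_K\subseteq H$. Condition (i) makes the supports of $T$ and $\rho T$ disjoint, so $T$ and $\rho T$ are disjoint sets and $|T\cup\rho T|=2|T|$. Therefore $\mu_H(v)\ge2|T|$.

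The version with $\lambda_K$ and $\rho_R$ exchanged is symmetric: take $T\subseteq\rho_Rv$ and $s\in\lambda_K$, using the same commutation.

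\emph{Step 3: the consequence.} Take $T\subseteq\lambda_Kv$ co-measurable of maximum size, so $|T|=\mu_{\lambda_K}(v)$, and suppose some twist $\rho$ mirrors it. Step 2 gives
\[
\mu_{\lambda_K\rho_R}(v)\ge 2\mu_{\lambda_K}(v)=\mu_S(v),
\]
so the Abelian group $\lambda_K\rho_R\le E$ reaches $\mu_S(v)$.

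\emph{Main obstacle.} The only non-formal point is in Step 1: checking that twisted right translations are genuine code automorphisms and that they commute with $\lambda_K$. Commutation is immediate from associativity. Automorphism follows from Theorem~\ref{thm:class}(a) via the identity $k^{-1}bh=b$, which holds because $Bh=kB$. Once Step 1 is in place, the rest is a direct instantiation of Lemma~\ref{lem:mirror} with $S=H$.
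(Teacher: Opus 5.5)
Your proof is correct and follows the paper's route. The paper also observes that $\lambda_K$ and $\rho_R$ commute elementwise, so $\lambda_K\rho_R$ is an Abelian group containing both the group generating $T$ and the mirroring element, and then applies Lemma~\ref{lem:mirror} to it; your explicit check via Theorem~\ref{thm:class}(a) that twisted right translations are code automorphisms, and that $|T\cup\rho T|=2|T|$, fills in details the paper leaves implicit.
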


The hypothesis of Lemma~\ref{lem:transfer} says that $v$ is fixed by the twisted conjugation
$\lambda_{\theta(r)}^{-1}\rho_r$, a property of the logical operator rather than of the code.

\begin{remark}[A gain in the presence of twists]\label{rem:counter}
Non-rigidity does not force $\gamma=1$. Let $G=D_5\times\ZZ_3$ in the notation of Section~\ref{sec:notation} and take
$a=y+r+r^{-1}$ (central) and $b=(1+r+r^3)y^2+s_4$. This is a $[[60,8,7]]$ code with check weight
seven. Here $\Cz(b)=\ZZ_3$ but $Bs_1=s_2B$, so $T(B)$ has order $6$ and $E=\Tw(\Code)$ has order
$60$, twice the translation group, with no other types. For five different minimum-weight
logicals the left action alone gives $\mu_S=5$, while each of the six maximal Abelian subgroups
of $E$ gives at most $4$, so $\gamma\ge5/4$. Since $\mu_S$ is odd the left advantage is not a
single mirror doubling, and no twisted conjugation fixes the logical.
\end{remark}

The mirror condition of Lemma~\ref{lem:mirror} is sufficient but not necessary, and the mirroring
element need not be an involution. In $[[216,16,9]]$ over $\ZZ_9\rtimes_8\ZZ_{12}$, whose only
involution is central, the elements satisfying both conditions have orders $9$, $12$ and $18$. In
the $[[216,16,9]]$ code over $\ZZ_9\rtimes_2\ZZ_{12}$ no single element mirrors the maximal
left-Abelian set, but $\mu$ still doubles.

Table~\ref{tab:reeval} compares, for a set of codes, the naive gain $\gamma_{\mathrm{Tr}}$
measured against translations alone with the gain $\gamma$ over the full block-affine group.
Every code of Table~\ref{tab:reeval} with $\gamma_{\mathrm{Tr}}>1$ and $\gamma=1$ loses its gain
to twisted translations or outer maps. The Mitten-scale codes $[[360,16,10]]$, $[[300,16,10]]$ and
$[[336,16,10]]$, with $b$ central, already had $\gamma=1$ (the last on greedy bounds). The
mechanisms of Section~\ref{sec:twisted} account for the lost gains. A search against
translations alone found $386$ codes with a left-only gain above one. Of these, $77$ have twists
(none on the left). In $75$ of them an Abelian subgroup of $E$ reaches $\mu_S$, and in $72$ of
those every twisted right translation maps the chosen logical into its left orbit, which is the
hypothesis of Lemma~\ref{lem:transfer}. For the six codes of Table~\ref{tab:reeval} from
$[[108,20,6]]$ to $[[120,8,10]]$ that lose their gain, a best Abelian subgroup of $E$ satisfies
Lemma~\ref{lem:transfer} or Lemma~\ref{lem:tmirror}. For $[[48,10,6]]$ and $[[120,8,10]]$ orbit
transfer fails, and the gain is lost through the twisted mirror.

Rigidity is neither necessary nor sufficient for a gain. The rigid $[[84,16,7]]$ code and three
rigid codes over metacyclic groups ($(\ZZ_7\rtimes\ZZ_3)\times\ZZ_2$, $\ZZ_9\rtimes\ZZ_3$ and
$\ZZ_{13}\rtimes\ZZ_3$) have $\gamma\le1.2$. The code of Remark~\ref{rem:counter} has twists and
$\gamma\ge5/4$, the non-rigid $[[216,16,9]]$ code over $\ZZ_9\rtimes_2\ZZ_{12}$ has exact
$\gamma=2$ (Table~\ref{tab:family}), and the search of Section~\ref{sec:search} finds further
non-rigid codes with $\gamma=2$. Extra automorphisms can also raise the numerator. In $[[120,18,6]]$ over $A_4\times\ZZ_5$, with
$b$ central, automorphisms of inversion type raise $\mu_E$ to $9$, against a baseline of at least
$6$. In $[[108,16,6]]$ over $S_3\times\ZZ_9$, also with $b$ central, the translations alone
already give $10$, against a baseline of at least $8$. A second code of the search behind
Remark~\ref{rem:counter}, $[[72,12,6]]$ over $A_4\times\ZZ_3$, keeps $\mu_S=5$ against $4$ for
three of five minimum-weight logicals and has $\gamma=1$ for the other two.

\begin{table}[!t]
\centering\small
\caption{Codes re-evaluated over the full block-affine group $E=\Aba(\Code)$. $|\Tr|$, $|E|$: orders of the translation group and of $E$ (exhaustive search).
Extra automorphisms: tw, twisted translations; out, outer $\varphi$; inv, inversion without
exchange; exch, block exchange; unit, non-group central unit. $\mu_E$ and the baseline
$\max_H\mu_H$ over the Abelian subgroups $H$ of $E$ use the same minimum-weight
logical, with the order of a best $H$. $\ge$: sampled maximal Abelian subgroups,
so $\le$ marks an upper bound on the gain; $g$: greedy values. $\gamma_{\mathrm{Tr}}$: naive gain
(left action against Abelian subgroups of $\Tr$). $^\flat$Two orbits of minimum-weight logicals,
with $\gamma=2$ and $\gamma=1$; this $[[216,12,12]]$ code is distinct from the second one of
Table~\ref{tab:family}. $[[60,12,6]]$ has $d_X=6$ and $d_Z=7$.}
\label{tab:reeval}
\setlength{\tabcolsep}{3.5pt}
\resizebox{\ifdim\width>\textwidth\textwidth\else\width\fi}{!}{%
\begin{tabular}{llrrlcccccc}
\toprule
$G$ & $[[n,k,d]]$ & $|\Tr|$ & $|E|$ & extra & rigid & $\mu_E$ & $\max_H\mu_H$ ($|H|$) & $\gamma$ & $\gamma_{\mathrm{Tr}}$\\
\midrule
$\ZZ_6\times\ZZ_6$ & $[[72,12,6]]$ & 36 & 432 & out, inv, exch & & 6 & 6 (18) & 1 & 1.00\\
$S_3\times\ZZ_6$ & $[[72,8,6]]$ & 36 & 36 & none & Y & 8 & 4 (12) & \textbf{2} & 2.00\\
$S_3\times\ZZ_9$ & $[[108,16,6]]$ & 324 & 648 & inv & & 10 & $\ge8$ (54) & $\le1.25$ & 1.00\\
$S_3\times\ZZ_9$ & $[[108,20,6]]$ & 54 & 162 & tw & & 10 & 10 (54) & 1 & 1.67\\
$S_3\times\ZZ_8$ & $[[96,16,6]]$ & 48 & 192 & tw, out & & 6 & 6 (48) & 1 & 1.50\\
$S_3\times\ZZ_8$ & $[[96,8,8]]$ & 144 & 288 & out & & 4 & 4 (48) & 1 & 1.00\\
$S_3\times\ZZ_7$ & $[[84,16,7]]$ & 42 & 42 & none & Y & 6 & 5 (21) & 1.2 & 1.20\\
$S_3\times\ZZ_5$ & $[[60,8,6]]$ & 30 & 90 & tw & & 4 & 4 (30) & 1 & 2.00\\
$D_4\times\ZZ_3$ & $[[48,10,6]]$ & 24 & 48 & tw & & 4 & 4 (24) & 1 & 2.00\\
$D_4\times\ZZ_5$ & $[[80,12,6]]$ & 160 & 640 & out, inv & & 4 & 4 (40) & 1 & 1.00\\
$D_4\times\ZZ_5$ & $[[80,10,8]]$ & 80 & 80 & none & Y & 5 & 4 (40) & 1.25 & 1.00\\
$D_5\times\ZZ_3$ & $[[60,12,6]]$ & 30 & 30 & none & Y & 4 & 2 (6) & \textbf{2} & 2.00\\
$D_5\times\ZZ_4$ & $[[80,8,7]]$ & 200 & 400 & out & & 4 & 4 (40) & 1 & 1.00\\
$A_4\times\ZZ_5$ & $[[120,18,6]]$ & 720 & 5760 & out, inv, unit & & 9 & $\ge6$ (40) & $\le1.5$ & 1.00\\
$A_4\times\ZZ_3$ & $[[72,12,6]]$ & 36 & 648 & tw, out & & 4 & 4 (12) & 1 & 1.33\\
$(\ZZ_7\rtimes\ZZ_3)\times\ZZ_2$ & $[[84,14,6]]$ & 42 & 42 & none & Y & 3 & 3 (6) & 1 & 1.00\\
$\ZZ_9\rtimes\ZZ_3$ & $[[54,6,7]]$ & 27 & 27 & none & Y & 3 & 3 (9) & 1 & 1.00\\
$\ZZ_{13}\rtimes\ZZ_3$ & $[[78,6,8]]$ & 39 & 39 & none & Y & 3 & 3 (3) & 1 & 1.00\\
\midrule
$S_3\times\ZZ_{10}$ & $[[120,8,10]]$ & 60 & 120 & tw & & 4 & 4 (60) & 1 & 2.00\\
$\ZZ_9\rtimes_8\ZZ_{12}$ & $[[216,16,9]]$ & 108 & 108 & none & Y & 8 & 4 (12) & \textbf{2} & 2.00\\
$\ZZ_9\rtimes_2\ZZ_{12}$ & $[[216,16,9]]$ & 108 & 648 & tw, out & & 8 & 4 (36) & \textbf{2} & 2.00\\
$\ZZ_9\rtimes_8\ZZ_{12}$ & $[[216,12,12]]$ & 108 & 108 & none & Y & 6 & 3 (54) & \textbf{2}$^\flat$ & 2.00\\
$S_3\times\ZZ_{27}$ & $[[324,16,7{\le}d{\le}13]]$ & 162 & 486 & tw & & 8 & 8 (162) & 1 & 2.00$^g$\\
$S_3\times\ZZ_{30}$ & $[[360,32,8]]$ & 180 & 360 & out & & $\ge12^g$ & $\ge12^g$ (36) & 1$^g$ & 2.00$^g$\\
\bottomrule
\end{tabular}}
\end{table}

\section{Proofs}\label{app:proofs}

\begin{proof}[Proof of Proposition~\ref{prop:aut}]
Permuting the columns of $H_X$ by $\lambda_g$ gives $[\Lm(a)\Lm(g)^{-1}\,|\,\Rm(b)\Lm(g)^{-1}]$.
Since $\Lm(g^{-1})$ commutes with every $\Rm(\cdot)$, and with $\Lm(a)$ exactly when $ga=ag$, this
is $\Lm(g^{-1})[\Lm(a)\,|\,\Rm(b)]$, a row-equivalent matrix. The same computation applies to
$H_Z=[\Rm(b^*)\,|\,\Lm(a^*)]$ because $ga=ag$ is equivalent to $g^{-1}a^*=a^*g^{-1}$. The case of
$\rho_h$ is symmetric, and $g$ commutes with $a$ iff conjugation by $g$ permutes $\supp(a)$.
\end{proof}

\begin{proof}[Proof of Lemma~\ref{lem:transport}]
$\varphi$ extends to an algebra automorphism of $\FG$ commuting with ${}^*$. For $s\in\FG$ put
$s'=g_2\varphi(s)h_1$, a bijection of $\FG$ sending group elements to group elements. Then
$\sigma^+(a^*s,sb^*)=(g_1\varphi(a^*)g_2^{-1}s',\;s'h_1^{-1}\varphi(b^*)h_2)=(a'^*s',s'b'^*)$, and
with $t'=g_1\varphi(t)h_2$, $\sigma^+(tb,at)=(t'h_2^{-1}\varphi(b)h_1,\;g_2\varphi(a)g_1^{-1}t')
=(t'b',a't')$.
\end{proof}

\begin{proof}[Proof of Lemma~\ref{lem:JIW}]
$J(a^*s,sb^*)=(bs^*,s^*a)$ and $J(tb,at)=(t^*a^*,b^*t^*)$, which are the checks $s^*$ and $t^*$ of
$\Code(b^*,a^*)$. For central $a,b$, $I(a^*s,sb^*)=(as^*,s^*b)$, $I(tb,at)=(t^*b^*,a^*t^*)$,
$W(a^*s,sb^*)=(b^*s,sa^*)$ and $W(tb,at)=(ta,bt)$.
\end{proof}

\begin{proof}[Proof of Lemma~\ref{lem:units}]
$u^*$ is central, $((au)^*s,s(ub)^*)=(a^*(u^*s),(u^*s)b^*)$ and $(tub,aut)=((tu)b,a(tu))$. Since
$s\mapsto u^*s$ and $t\mapsto tu$ are bijections, both row spaces are unchanged.
\end{proof}

\begin{proof}[Proof of Theorem~\ref{thm:class}]
(a) is Lemma~\ref{lem:transport} with Lemma~\ref{lem:units}.

(b) Sufficiency follows from Lemma~\ref{lem:transport}. For necessity, let $\sigma^+$ carry
$X$-checks to $X$-checks and let $\pi$ be the induced map on check labels, well defined because
the checks are distinct under (H). Comparing the two blocks of the image of check $r$ gives
$g_1\varphi_1(A)^{-1}\varphi_1(r)h_1=A^{-1}\pi(r)$ and
$g_2\varphi_2(r)\varphi_2(B)^{-1}h_2=\pi(r)B^{-1}$. With $p=\pi(e)$, the first identity at $r=e$
turns the first into $A^{-1}p\,h_1^{-1}\varphi_1(r)h_1=A^{-1}\pi(r)$, and since the right
stabiliser of $A^{-1}$ is trivial, $\pi(r)=p\,h_1^{-1}\varphi_1(r)h_1$. Likewise the second gives
$\pi(r)=g_2\varphi_2(r)g_2^{-1}p$. Equating, $\varphi_2=c_k\circ\varphi_1$ with
$k=g_2^{-1}ph_1^{-1}$. Rewriting the second block map with $\varphi=\varphi_1$ and renaming
$g_2k\to g_2$, $k^{-1}h_2\to h_2$ gives $p=g_2h_1$, and the two identities at $r=e$ become
$g_2\varphi(A)g_1^{-1}=A$ and $h_2^{-1}\varphi(B)h_1=B$. Two representations with a common
automorphism differ by a central element $z$, which replaces $(a',b')$ by $(a'z,zb')$ and is
absorbed into the parameters.

(c) $J$ is an involution carrying $\Code(a,b)$ onto $\Code(b^*,a^*)$ check to check
(Lemma~\ref{lem:JIW}). Apply (a), and apply (b) to Tanner-level isomorphisms from $\Code(a,b)$
onto $\Code(b^*,a^*)$, whose proof is the same as for automorphisms, noting that (H) for
$(B^{-1},A^{-1})$ is equivalent to (H) for $(A,B)$.

(d) For sign $-$ without exchange, the first block gives
$g_1\varphi_1(r)^{-1}\varphi_1(A)h_1=A^{-1}\pi(r)$, and at $r=e$, $g_1\varphi_1(A)h_1=A^{-1}p$.
Hence every left translate $kA^{-1}$ is a right translate $A^{-1}m(k)$ with $m(k)$ unique by (H).
The map $m$ is multiplicative and injective, so $m\in\Aut(G)$, and inverting gives $Ak=m(k)A$. The
second block gives the same for $B$. Sign $+$ with exchange is $\sigma'\circ J$ with $\sigma'$ of
sign $-$. For central $a,b$ the criteria follow from Lemma~\ref{lem:JIW} as in (c).

(e) After composing with $J$, let block $1$ have sign $+$ and block $2$ sign $-$. As in (b), block
$1$ gives $\pi(r)=p\kappa_1(r)$ with $\kappa_1\in\Aut(G)$, and block $2$ gives
$Y\kappa_2(r)^{-1}=\kappa_1(r)Y$ for $Y=B^{-1}$ and some $\kappa_2\in\Aut(G)$. With
$\theta=\kappa_1\kappa_2^{-1}$, $Yk^{-1}=\theta(k)Y$ for all $k$. Then
$Y(kk')^{-1}=\theta(k')\theta(k)Y=\theta(k)\theta(k')Y$, so by (H) $\theta(G)=G$ is Abelian.
\end{proof}

\begin{proof}[Proof of Theorem~\ref{thm:twist}]
(i) If $Xh=kX$ and $Xh'=k'X$ then $Xhh'=kk'X$, and a diagonal pair lies in $T(X)$ iff $Xg=gX$.
(ii) If $(h,k),(h,k')\in T(X)$ then $kX=k'X$ and $k=k'$, so $\tau(h)=k$ is well defined on the set
$N(X)$ of first coordinates, which is a subgroup by (i). By (i), $\tau(hh')=\tau(h)\tau(h')$, so
$\tau$ is a homomorphism. If $\tau(h)=e$ then $Xh=X$, hence $h=e$ by the second hypothesis, so
$\tau$ is injective. Finally $\tau(h)=h$ iff $Xh=hX$, i.e.\ $h\in\Cz(X)$. (iii) If $v_1\in\Cz(X)$
and $Xv_1=v_2X$ then $v_1X=v_2X$ and $v_1=v_2$. For each $h$, $Xh=kX$ forces $k\in(Xh)x_0^{-1}$ for
a fixed $x_0\in X$. (iv) With $Y=w^{-1}X$, $Xh=wYh=wzhY=z(whw^{-1})X$. (v) $Xh=h(h^{-1}Xh)=hX$.
\end{proof}

\begin{proof}[Proof of Corollary~\ref{cor:structure}]
If $\varphi=c_k$ is inner, $g_ic_k(x)h_i=(g_ik)x(k^{-1}h_i)$, so the kernel of
$\sigma^+\mapsto[\varphi]$ consists of the maps with $\varphi=\mathrm{id}$, for which \eqref{eq:P}
is $Ag_1=g_2A$ and $h_2B=Bh_1$. The kernel of $T(A)\times T(B)\to\Tw(\Code)$ consists of the
parameters of the identity, $g_1=h_1^{-1}=z_1$, $g_2=h_2^{-1}=z_2$ central with $Az_1=z_2A$, so
$z_1=z_2$ by (H), of order $|Z(G)|$. The type map is well defined because
$\Hol(G)\cap\Hol(G)\iota=\emptyset$ for non-Abelian $G$ and by Theorem~\ref{thm:class}(e).
\end{proof}

\begin{proof}[Proof of Proposition~\ref{prop:rigid}]
By Corollary~\ref{cor:structure} the Tanner-level group equals $\Tr(\Code)$ iff there is one type,
$O(\Code)=1$ and $\Tw(\Code)=\Tr(\Code)$. The last holds iff $|T(A)||T(B)|=|\Cz(a)||\Cz(b)|$,
which is (R1) since $T(X)$ contains the diagonal part. The types are controlled by
Theorem~\ref{thm:class}(c) and (d).
\end{proof}

\begin{proof}[Proof of Lemma~\ref{lem:bound}]
The classes of a co-measurable set are independent in $[Sv]\le L_Z$, so $m\le\dim[Sv]\le k$, and
$\dim[Sv]=\rank[H_Z;Sv]-\rank H_Z\le\dim\operatorname{span}(Sv)$. Automorphisms preserve weight, so
the $m$ disjoint supports of size $|v|$ lie in $\reg(Sv)$ and $m|v|\le|\reg(Sv)|\le n$.
\end{proof}

\begin{proof}[Proof of Lemma~\ref{lem:mirror}]
Supports in $sT$ are pairwise disjoint because $s$ is a permutation, and they lie in
$s\cdot\reg(T)$, disjoint from $\reg(T)$ by (i). If
$\sum_ic_it_i+\sum_jd_j\,st_j\in\rowsp H_Z$, then with $w=\sum c_it_i$ and $w'=\sum d_jt_j$,
$sw'=w\in[T]\cap s[T]=0$ by (ii). Since $s$ acts invertibly on $L_Z$, $w'=0$, and independence of
$T$ gives $c=d=0$. The translates $st_j$ lie in the $S$-orbit of $v$.
\end{proof}

\begin{proof}[Proof of Lemmas~\ref{lem:transfer} and~\ref{lem:tmirror}]
The first claim of Lemma~\ref{lem:transfer} is monotonicity of $\mu$ in the orbit. Left and right
multiplications commute blockwise, so $\lambda_K$ and $\rho_R$ commute elementwise, and
$\lambda_k\rho_rv=\lambda_{k\theta(r)}v$ runs over $\lambda_{\Cz(a)}v$. Lemma~\ref{lem:tmirror}
is Lemma~\ref{lem:mirror} applied to the Abelian group $\lambda_K\rho_R$, which contains both the
group generating $T$ and the mirroring element.
\end{proof}

\begin{proof}[Proof of Lemma~\ref{lem:merged}]
For $m=1$ the lemma is Lemma~2 of Williamson and Yoder \cite{williamson2024} with $X$ and $Z$
exchanged. We give the direct argument for all $m$. Let $z=(z_D,z_E)$ be a non-trivial
$Z$-logical of the merged code, with data part $z_D$ and edge part $z_E$. Commuting with the
cycle checks of $\Gamma_i$ makes the restriction of $z_E$ to $E(\Gamma_i)$ orthogonal to its
cycle space, hence a coboundary $\delta U_i$ with $U_i\subseteq v_i$. Replacing $U_i$ by
$v_i\setminus U_i$ if necessary, $|U_i|\le|v_i|/2$. Commuting with the deformed check
$r\prod_iX(F_{r,i})$ gives
$0=|z_D\cap r|+\sum_i|\delta U_i\cap F_{r,i}|=|z_D\cap r|+\sum_i|U_i\cap r|$ modulo two,
because $|\delta U\cap F|=|U\cap\partial F|$ modulo two. So $z'=z_D+\sum_iU_i$ lies in
$\ker H_X$, and $z'$ is $z$ times the Gauss laws of the vertices in $\bigcup_iU_i$, which is
equivalent to $z$ in the merged code. If $z'$ were a stabiliser of the base code it would be one
of the merged code, and so would $z$. Hence $z'$ is a non-trivial $Z$-logical of the base code and
$|z'|\ge d_Z$. The supports $v_i$ are disjoint, so
\[
|z|=|z_D|+\sum_i|\delta U_i|\ge|z_D|+\sum_ih(\Gamma_i)|U_i|\ge|z_D|+\sum_i|U_i|\ge|z'|\ge d_Z.
\]
For the $X$ side, let $x=(x_D,x_E)$ be a non-trivial $X$-logical of the merged code. It commutes
with the base $Z$-checks, which do not touch the edges, so $x_D\in\ker H_Z$. Suppose $x_D$ were a
sum of base $X$-checks $\sum_jr_j$. Multiplying $x$ by the deformed checks $r_j\prod_iX(F_{r_j,i})$
would leave an operator on the edges alone that commutes with every Gauss law, hence has zero
boundary in each $\Gamma_i$, lies in the cycle space and is a product of cycle checks, and then
$x$ would be a stabiliser. So $x_D$ is a non-trivial $X$-logical of the base code and
$|x|\ge|x_D|\ge d_X$. No expansion is needed on this side.
\end{proof}

\section{Computing block-affine automorphisms}\label{app:aut}

Group elements are stored as multiplication tables, and $\Aut(G)$ is enumerated from a small
generating set by trying all order-preserving images of the generators. The search covers the
$|G|^2$ two-sided translations, the maps $x\mapsto g\varphi(x)$ with $\varphi\in\Aut(G)$ on both
blocks, and all pairs $(\sigma_1,\sigma_2)$ of maps in $\Hol(G)\cup\Hol(G)\iota$ applied
independently to the two blocks, with and without exchange. The last class contains every
block-affine map, and we make it exhaustive with a hash join
on row-space membership of the image of the first $X$- and $Z$-check. The image under
$(\sigma_1,\sigma_2)$ lies in the row space iff the syndromes of its two block parts against a
basis of the orthogonal complement agree, and each part depends on one map only. This test is
valid for maps that do not send checks to checks, such as central-unit relabellings. Every
remaining pair is then checked in full. A map is accepted if it carries each row space into
itself, which we test exactly against a basis of its orthogonal complement, and every accepted
map is re-verified by a rank test. The group of accepted maps is $E=\Aba(\Code)$, and the code is
rigid iff $E$ equals the group generated by $\lambda_{\Cz(a)}$ and $\rho_{\Cz(b)}$.

The $22$ codes of Section~\ref{sec:rigidity} are the $18$ codes in the upper part of
Table~\ref{tab:reeval}, the $[[120,8,10]]$ and $[[216,16,9]]$ codes over the
groups of Mitten codes, and the Mitten-scale $[[324,16]]$ and $[[360,32,8]]$ codes. The order
formula of Corollary~\ref{cor:structure} holds for the Tanner-level group of every non-Abelian
code, and no automorphism with mixed signs occurs, as Theorem~\ref{thm:class}(e) requires.
The exceptional $[[120,18,6]]$ code over $A_4\times\ZZ_5$ has $5760$ block-affine
automorphisms, of which $2880$ are Tanner-level and predicted exactly. The other $2880$ satisfy Theorem~\ref{thm:class}(a) with the non-group central unit $u=Ky^3$, where
$K$ is the class sum of the three involutions of $V_4\le A_4$. In all $22$ codes every
block-affine automorphism satisfies the unit criterion of Theorem~\ref{thm:class}(a) or its $J$,
$I$, $W$ analogue with a central unit. We have no proof that this criterion is necessary in
general.

\section{Exact co-measurability and baselines}\label{app:comeas}

Given an orbit $Sv$ we reduce each translate modulo an echelon form of $H_Z$, build the conflict
graph of overlapping supports, and run a branch-and-bound search that keeps an incremental $\F_2$
basis of the chosen representatives and prunes with
$\min(\#\text{candidates},\lfloor|\reg|/|v|\rfloor,k-\#\text{chosen})$. The set may be assumed to
contain $v$ itself. The baseline is the maximum of the same exact search over every maximal
Abelian subgroup of $E$. We enumerate these exhaustively upward from the centre, using that a
subgroup $H$ is maximal Abelian iff it equals its centraliser. Where $E$ is large, as marked in
Table~\ref{tab:reeval}, maximal Abelian subgroups are sampled and the baseline is a lower bound.
For each code we find minimum-weight $Z$-logicals by information-set search, group them into
orbits under $E$, and evaluate up to $12$ orbits.

\paragraph{Block idempotents.} Lifted surgery organises the logical space by the primitive
central idempotents of the symmetry algebra. For a non-Abelian symmetry group $S$ of even order,
$\F_2[S]$ is not semisimple and the Wedderburn decomposition is unavailable. Since $S$ acts by
automorphisms, $L_Z$ is still an $\F_2[S]$-module and splits as $\bigoplus_ie_iL_Z$ over the
primitive central idempotents $e_i$. For $[[72,8,6]]$ over $S_3\times\ZZ_6$ the logical space is
a single eight-dimensional block. The centre $Z$ of $\F_2[S]$, spanned by the class sums, is a
commutative algebra in characteristic two, so $(x+y)^2=x^2+y^2$ and $\Frob:z\mapsto z^2$ is
$\F_2$-linear. Its fixed space, the Berlekamp subalgebra \cite{berlekamp1967,lidl1997}, is exactly
the set of central idempotents, and the idempotents of a commutative ring form a Boolean algebra
whose atoms, the minimal non-zero elements, are the primitive ones. In
practice $\Frob$ is a $c\times c$ matrix over $\F_2$, where $c$ is the number of conjugacy
classes. We enumerate its fixed space and keep the minimal non-zero elements under $e\le f$ iff
$ef=e$. The decomposition of $L_Z$ is obtained by applying each idempotent to a logical basis and
taking ranks modulo $\rowsp H_Z$.

\section{Distance certification}\label{app:dist}

Upper bounds come from a randomised information-set search, which also supplies explicit
logicals of the reported weight. The methods below certify lower bounds.

\paragraph{Exact enumeration.} When $a$ is central, left translations by $G$ are automorphisms
that act regularly on each block. A $Z$-logical $(u,w)$ satisfies $M_1u+M_2w=0$ with
$[M_1\,|\,M_2]=H_X$, and if $u\neq0$ it may be translated so that the identity lies in
$\supp u$, and likewise for $w$. If $|u|+|w|\le D-1$ then $|u|\le t_1$ or $|w|\le t_2$ for any
split $t_1+t_2=D-2$, chosen to minimise the work. In the first case we enumerate all $u$ with
$e\in\supp u$ and $|u|\le t_1$, keep those with $Au$ in the image of $B$, and scan the whole coset
$w\in B^+Au+\ker B$ for $|w|\le D-1-|u|$. The second case is symmetric, and $u=0$ or $w=0$ is
handled by enumerating $\ker B$ or $\ker A$. Every vector found is tested for membership of
$\rowsp H_Z$. If no non-trivial logical of weight below $D$ appears, $d_Z\ge D$ is certified. The
same procedure with $H_Z$ gives $d_X$, and an explicit logical of weight $D$ gives equality. The
enumeration takes seconds to minutes on the codes of Table~\ref{tab:family} ($273$\,s in total
for $[[216,16,9]]$, and $43$\,s per logical type for the first $[[216,12,12]]$ code), where
integer programming took up to hours and did not close four of them. The enumeration certified
$d=13$, $13$ and $12$ where integer programming reached only $9$, $7$ and $6$ within its
$1200$\,s limit, and $d=12$ for the first $[[216,12,12]]$ code, where integer programming closed
three of $24$ sub-problems in two hours.

\paragraph{Meet in the middle.} For the larger codes of Table~\ref{tab:gain3} the block kernels
are too large to scan. With the same translation reduction, a $Z$-logical $v$ of weight at most
$D-1$ may be assumed to contain a fixed pivot qubit in the block where it is non-zero, and it
splits as $v=x+y$ with the pivot in $x$, $|x|=\lceil|v|/2\rceil$ and $|y|=\lfloor|v|/2\rfloor$. We
enumerate all such $x$ and all $y$ of weight at most $\lfloor(D-1)/2\rfloor$ and join them on
$H_Xx=H_Xy$ through a random $\F_2$-linear $64$-bit hash of the syndrome, which never misses an
equal pair. Collisions are removed by the exact test $H_Xv=0$, and every $v\neq0$ found is tested
for membership of $\rowsp H_Z$.

\paragraph{Information-set coset scans.} The block split scans cosets $w\in w_0+\ker B$ of
dimension up to $18$ for vectors of weight at most $r$. We replace each scan by the
Brouwer--Zimmermann argument. Given disjoint information sets $I_1,\dots,I_m$ of $\ker B$ and any
allocation with $\sum_i(s_i+1)>r$, a vector of weight at most $r$ has weight at most $s_i$ on some
$I_i$, and it is determined by its restriction to $I_i$. Enumerating those restrictions returns
exactly the vectors of the literal scan.

\paragraph{Integer programming.} $d_Z=\min_j\min\{|x|:H_Xx=0,\ \langle\ell_j,x\rangle=1\}$ over
$X$-logicals $\ell_j$ whose orbits under the translation group span $L_X$ modulo $\rowsp H_X$,
since a $Z$-logical anticommuting with a translate $g\ell_j$ can be translated to anticommute with
$\ell_j$ at the same weight. Parity constraints are linearised with integer slack variables and
solved with HiGHS through SciPy, either with the upper bound $D-1$ (infeasibility certifies
$d\ge D$) or to optimality. On a timeout the dual bound is the certified lower bound given in
Table~\ref{tab:family}.

\paragraph{Audits.} All three methods agree with complete enumeration of $\ker H$ on $54$ draws
of $53$ distinct small codes at $D=d,d+1,d+2$ ($162$ tests each). The block split also agrees
with exact integer programming on $44$ codes, including $[[216,16,9]]$ and $[[120,8,10]]$, and
the other two methods with integer programming on the ILP-certified codes ($144$ and $76$ test
cases) and with the block split and each other where both run. The information-set scan returned
the same vectors as the literal scan in all $324$ comparisons. Deliberately shrinking one side of a
split is detected as a gap, and at $D=13$ on $[[336,26,12]]$ the information-set scan finds the
weight-$12$ logicals, as it must.

\section{Circuits, schedules, decoding and statistics}\label{app:circuits}

\paragraph{Gauging gadgets.} For a $Z$-logical $Z_v$ with $|v|\ge6$ the ancilla graph is a random
$4$-regular graph on $v$, the first of the seeds $0,1,2,\dots$ that is connected and, for
$|v|\le16$, has exact Cheeger constant $h\ge1$ by brute force over vertex subsets. For $|v|\le5$
it is the complete graph. Each row $r$ of $H_X$ is deformed by $X$ on a union of shortest paths
whose boundary is $r\cap v$, the cycle checks run over a cycle basis, and disjoint operators are
gauged by the direct sum of their gadgets. When the ancillas are read out in the $X$ basis
with outcomes $s_e$, the cycle checks force $s=\delta\psi$ for a vertex potential $\psi$,
and the byproduct on the data is $\prod_{\psi_u=1}Z_u$, unique up to the measured operator. We
track this frame exactly in the stabiliser simulations.

\paragraph{Circuits and noise.} Stim circuits \cite{gidney2021stim} implement two opening base
rounds, $R=d$ merged rounds with ancillas prepared in $|+\rangle$, ancilla read-out in the $X$
basis, two closing base rounds and data read-out. Noise is two-qubit depolarising with
probability $p$ after every CNOT, $p$ on measurement and reset, and $p/10$ on every idle data
qubit per layer. Detectors compare each check with its previous determination across the merge
and split boundaries, and all are verified deterministic in the noiseless circuit. A protocol
fails if any measured logical outcome is wrong after decoding. The circuit builder can also track
the $k-m$ unmeasured $Z$-type logical qubits, but all runs reported here count the measured
outcomes only.

\paragraph{CNOT schedules.} CNOTs are scheduled in parallel layers. For the three weight-six codes
and for $[[240,18,10]]$ the $X$-check layers are translation invariant. The $k$-th layer applies
the CNOTs of one support term of $a$ or $b$ to every check, which is a perfect matching, so a hook
fault after $t$ CNOTs leaves the same pattern of $t$ terms on every check. The hook distance is the
least number of single-qubit $X$ faults and hook faults whose combined data error is undetected by
the $Z$ checks and flips a $Z$-logical. We estimate it by information-set search, which gives
upper bounds. Of the $360$ term orders up to reversal of a weight-six code we take the one with
the largest hook distance: at most $4$ for $[[72,8,6]]$, against at most $2$ for a greedy edge
colouring, and at most $6$ for $[[108,16,6]]$ and the bivariate-bicycle code, against at most $3$
and $4$. For $[[72,8,6]]$ a stim search finds a four-fault logical error, so its circuit distance
is at most $4$. For the weight-eight $[[240,18,10]]$ code we score $240$ random term orders of the $20160$
with $20$ information-set trials each, rescore the best ten with $100$, and check the winner with
$1200$ further trials. It has hook distance at most $7$, against at most $6$ for a greedy edge
colouring.
The $[[216,16,9]]$ circuits use a greedy proper edge colouring with random restarts, with hook
distance at most $9$, and we found no lower value. In the merged codes the base layers come first
and the edge-qubit CNOTs follow in greedy layers. Only $X$-check hooks matter, because the
experiments are in the $Z$ basis and $Z$-check hooks spread $Z$ errors, which flip no $Z$-type
observable.

\paragraph{Decoding.} Decoding uses Relay-BP \cite{muller2025relay} (implementation
\texttt{relay-bp}~0.2.2, single precision) on the $Z$ sector of the detector error model with
undecomposed hyperedges. The rows are the detectors built from $Z$-basis measurements. Each fault
column is projected onto these detectors and the observables, and columns of equal projection are
merged, with their probabilities combined by $q\mapsto q(1-p)+p(1-q)$. Every observable reads
$Z$-basis measurements only, so no fault that flips an observable is lost. The correlations
between $X$ and $Z$ components of $Y$ faults are ignored, as in any CSS decoder. The decoder runs
with memory strength $0.1$ for $80$ pre-iterations, then up to $200$ relay legs of $60$
iterations each with memory strengths drawn from $(-0.24,0.66)$, stopping after the first
converged leg. On the joint detector error model a growing fraction of shots of the larger merged
circuits never converges. For the $[[216,16,9]]$ full gadget at $p=3\times10^{-3}$ the joint
decoder failed on $8\%$ of shots, while the $Z$-sector decoder gave no failure in $250$ shots and
took about $1$\,s per shot instead of $150$\,s. An earlier version of this analysis used belief
propagation with localised-statistics decoding \cite{roffe2020,hillmann2024} (min-sum with scaling
$0.9$, parallel schedule, $30$ iterations, combination-sweep order six). Where it disagreed with Relay-BP on identical samples its correction was
typically about twice as heavy in log-likelihood weight, and it overstated failure rates by up to
about an order of magnitude. A serial schedule removed most of the difference.

\begin{table}[!t]
\centering\small
\caption{Surgery failure probabilities in percent (Relay-BP, $Z$ sector), with the $95\%$ Wilson
interval as sub- and superscript. Runs go to $5000$ shots ($1000$ for $[[216,16,9]]$ and
$[[240,18,10]]$) or stop at the first batch of $200$ shots that reaches $200$ failures ($150$), so saturated
points have fewer shots.
Rounds: merged rounds of each protocol. BB: bivariate bicycle. Memory: probability that any of the
$k$ logical qubits fails in $R$ rounds. Single, lifted $A$, lifted $E$: one gadget measuring $1$,
$m_A$ and $m$ operators ($m=3$ on BB; $m=12$, $m_A=4$ on $[[240,18,10]]$ for the simulated orbit;
$m=8$, $m_A=4$ otherwise). Derived columns, with Wilson endpoints mapped through the same
function: $1-(1-p_A)^{m/m_A}$, and $1-(1-p_1)^{m_A}$ ($A$ set) and $1-(1-p_1)^{m}$ (full set).}
\label{tab:num}
\resizebox{\ifdim\width>\textwidth\textwidth\else\width\fi}{!}{%
\begin{tabular}{llccccccc}
\toprule
code & $p$ & memory & single & lifted $A$ & $(m/m_A)\times$ lifted $A$ & lifted $E$ & seq.\ ($A$ set) & seq.\ (full set)\\
\midrule
BB $[[72,12,6]]$ & rounds & $R=6$ & 6 & n/a & n/a & 6 & n/a & 18\\
 & $1{\times}10^{-3}$ & $0_{0}^{0.077}$ & $0_{0}^{0.077}$ & n/a & n/a & $0_{0}^{0.077}$ & n/a & $0_{0}^{0.23}$\\
 & $2{\times}10^{-3}$ & $0.28_{0.17}^{0.47}$ & $0.16_{0.081}^{0.32}$ & n/a & n/a & $0.14_{0.068}^{0.29}$ & n/a & $0.48_{0.24}^{0.94}$\\
 & $3{\times}10^{-3}$ & $1.4_{1.1}^{1.8}$ & $0.3_{0.18}^{0.49}$ & n/a & n/a & $1.1_{0.83}^{1.4}$ & n/a & $0.9_{0.54}^{1.5}$\\
 & $4{\times}10^{-3}$ & $4.2_{3.7}^{4.8}$ & $1.2_{0.93}^{1.5}$ & n/a & n/a & $2.3_{1.9}^{2.7}$ & n/a & $3.6_{2.8}^{4.6}$\\
\midrule
$[[72,8,6]]$ & rounds & $R=6$ & 6 & 6 & 12 & 6 & 24 & 48\\
 & $1{\times}10^{-3}$ & $0.38_{0.24}^{0.59}$ & $0.04_{0.011}^{0.15}$ & $0.1_{0.043}^{0.23}$ & $0.2_{0.085}^{0.47}$ & $0.08_{0.031}^{0.21}$ & $0.16_{0.044}^{0.58}$ & $0.32_{0.088}^{1.2}$\\
 & $2{\times}10^{-3}$ & $1.5_{1.2}^{1.8}$ & $0.22_{0.12}^{0.39}$ & $0.56_{0.39}^{0.81}$ & $1.1_{0.77}^{1.6}$ & $0.6_{0.42}^{0.86}$ & $0.88_{0.49}^{1.6}$ & $1.7_{0.98}^{3.1}$\\
 & $3{\times}10^{-3}$ & $5_{4.3}^{5.7}$ & $1_{0.78}^{1.3}$ & $2_{1.7}^{2.4}$ & $4_{3.3}^{4.8}$ & $2.5_{2.1}^{2.9}$ & $4_{3.1}^{5.2}$ & $7.9_{6}^{10}$\\
 & $4{\times}10^{-3}$ & $11_{9.6}^{12}$ & $1.8_{1.4}^{2.2}$ & $5.1_{4.5}^{5.9}$ & $10_{8.8}^{11}$ & $7.1_{6.2}^{8}$ & $6.9_{5.7}^{8.5}$ & $13_{11}^{16}$\\
\midrule
$[[108,16,6]]$ & rounds & $R=6$ & 6 & 6 & 12 & 6 & 24 & 48\\
 & $1{\times}10^{-3}$ & $0.2_{0.11}^{0.37}$ & $0.02_{0.0035}^{0.11}$ & $0.02_{0.0035}^{0.11}$ & $0.04_{0.0071}^{0.23}$ & $0.08_{0.031}^{0.21}$ & $0.08_{0.014}^{0.45}$ & $0.16_{0.028}^{0.9}$\\
 & $2{\times}10^{-3}$ & $1.2_{0.93}^{1.5}$ & $0.08_{0.031}^{0.21}$ & $0.36_{0.23}^{0.57}$ & $0.72_{0.46}^{1.1}$ & $0.78_{0.57}^{1.1}$ & $0.32_{0.12}^{0.82}$ & $0.64_{0.25}^{1.6}$\\
 & $3{\times}10^{-3}$ & $5.1_{4.5}^{5.9}$ & $0.64_{0.45}^{0.9}$ & $1.2_{0.92}^{1.5}$ & $2.3_{1.8}^{3}$ & $2.5_{2.1}^{2.9}$ & $2.5_{1.8}^{3.6}$ & $5_{3.6}^{7}$\\
 & $4{\times}10^{-3}$ & $12_{11}^{14}$ & $1.8_{1.4}^{2.2}$ & $3.3_{2.8}^{3.8}$ & $6.4_{5.5}^{7.4}$ & $6.8_{5.9}^{7.7}$ & $6.9_{5.6}^{8.4}$ & $13_{11}^{16}$\\
\midrule
$[[216,16,9]]$ & rounds & $R=9$ & 9 & 9 & 18 & 9 & 36 & 72\\
 & $2{\times}10^{-3}$ & $0_{0}^{0.38}$ & $0_{0}^{0.38}$ & $0_{0}^{0.38}$ & $0_{0}^{0.76}$ & $0.1_{0.018}^{0.56}$ & $0_{0}^{1.5}$ & $0_{0}^{3}$\\
 & $3{\times}10^{-3}$ & $0.1_{0.018}^{0.56}$ & $0_{0}^{0.38}$ & $0.1_{0.018}^{0.56}$ & $0.2_{0.035}^{1.1}$ & $0.4_{0.16}^{1}$ & $0_{0}^{1.5}$ & $0_{0}^{3}$\\
 & $4{\times}10^{-3}$ & $0.3_{0.1}^{0.88}$ & $0.6_{0.28}^{1.3}$ & $1_{0.54}^{1.8}$ & $2_{1.1}^{3.6}$ & $2.9_{2}^{4.1}$ & $2.4_{1.1}^{5.1}$ & $4.7_{2.2}^{10}$\\
 & $5{\times}10^{-3}$ & $6_{4.7}^{7.6}$ & $3.8_{2.8}^{5.2}$ & $11_{8.9}^{13}$ & $20_{17}^{24}$ & $25_{22}^{28}$ & $14_{11}^{19}$ & $27_{20}^{35}$\\
\midrule
$[[240,18,10]]$ & rounds & $R=10$ & 10 & 10 & 30 & 10 & 40 & 120\\
 & $3{\times}10^{-3}$ & $9.7_{8}^{12}$ & $4_{3}^{5.4}$ & $8.7_{7.1}^{11}$ & $24_{20}^{29}$ & $12_{10}^{15}$ & $15_{11}^{20}$ & $39_{30}^{49}$\\
 & $4{\times}10^{-3}$ & $46_{41}^{51}$ & $22_{19}^{25}$ & $43_{38}^{48}$ & $81_{76}^{86}$ & $63_{58}^{68}$ & $62_{57}^{68}$ & $95_{92}^{97}$\\
\bottomrule
\end{tabular}}
\end{table}

\paragraph{Comparison details.} The observable for each measured logical is the product of its
Gauss laws in the first merged round combined with the final data read-out, so a flip means that
the surgery reported the wrong eigenvalue. On the bivariate-bicycle code, whose symmetry group is
Abelian, there is no separate Abelian gadget, and its $m=3$ is the exact maximum over
translations for the logical used, while Abelian groups that include its outer and inversion
automorphisms reach $\mu_E=6$ (Table~\ref{tab:reeval}). For the non-rigid $[[108,16,6]]$,
$m_A=4$ is the best Abelian group of left translations. Table~\ref{tab:num} gives the failure
probabilities and Table~\ref{tab:ratios} the ratios shown in Fig.~\ref{fig:ratios}. Against
single gadgets in sequence the two points where the point estimate favours the baseline are the
bivariate-bicycle code at $p=3\times10^{-3}$ ($1.1\%$ against $0.9\%$) and $[[108,16,6]]$ at
$p=2\times10^{-3}$. At $p=10^{-3}$ and, for $[[216,16,9]]$, up to $3\times10^{-3}$, there are too
few failures to resolve the differences. At $p=3\times10^{-3}$ the $[[240,18,10]]$ code is not in
a useful regime, as its ten-round memory fails with probability $10\%$. Another orbit of this
code admits an Abelian gadget with six operators, so two Abelian gadgets in twenty rounds could
measure twelve operators, though different ones, an alternative we have not simulated. Both
derived baselines are optimistic, since they ignore the idle errors that the other logical qubits
accumulate over the extra rounds, and the memory column of Table~\ref{tab:num} shows that these
are not small. All runs are $Z$-basis experiments, so they probe neither damage to the unmeasured
logical qubits nor the $Z$-type errors that only an $X$-basis experiment would reveal.

\paragraph{Schedules and merged codes.} With a greedy edge colouring the $[[72,8,6]]$ circuits
have circuit distance at most two, from pairs of hook faults. The absolute numbers are not those
of an optimised architecture: the ancilla graphs are $4$-regular, the merged checks have weight up
to $14$, and $[[216,16,9]]$ keeps the greedy schedule. The hook-aware schedule of $[[240,18,10]]$
reaches hook distance at most $7$, below its code distance $10$, and a check-adapted ancilla graph
for $[[216,16,9]]$ lowers its maximum merged check weight from $13$ to $10$. For the $[[360,8]]$
merged code of the $[[216,16,9]]$ code, information-set search finds a $Z$-logical of weight nine,
so $d'_Z=9$ exactly and $d'_X\ge9$ by Lemma~\ref{lem:merged}.

\begin{table}[!t]
\centering\small
\caption{Ratios plotted in Fig.~\ref{fig:ratios}: a baseline's failure probability over that of
the full gadget, with $95\%$ intervals from the delta method on the measured counts
(Table~\ref{tab:num}). Values above one
favour the full gadget. Points at $p=10^{-3}$, where the bivariate-bicycle full gadget has no
failures, and those for $[[216,16,9]]$ at $p\le3\times10^{-3}$ have too few failures to resolve
and are omitted.}
\label{tab:ratios}
\begin{tabular}{llcc}
\toprule
code & $p$ & Abelian gadgets in sequence & sequential single gadgets\\
\midrule
BB $[[72,12,6]]$ & $2{\times}10^{-3}$ & n/a & $3.4$ ($1.2$--$9.4$)\\
 & $3{\times}10^{-3}$ & n/a & $0.83$ ($0.47$--$1.5$)\\
 & $4{\times}10^{-3}$ & n/a & $1.6$ ($1.2$--$2.1$)\\
\midrule
$[[72,8,6]]$ & $2{\times}10^{-3}$ & $1.9$ ($1.1$--$3.1$) & $2.9$ ($1.5$--$5.8$)\\
 & $3{\times}10^{-3}$ & $1.6$ ($1.3$--$2.1$) & $3.2$ ($2.3$--$4.4$)\\
 & $4{\times}10^{-3}$ & $1.4$ ($1.2$--$1.7$) & $1.9$ ($1.5$--$2.4$)\\
\midrule
$[[108,16,6]]$ & $2{\times}10^{-3}$ & $0.92$ ($0.53$--$1.6$) & $0.82$ ($0.29$--$2.3$)\\
 & $3{\times}10^{-3}$ & $0.95$ ($0.70$--$1.30$) & $2.0$ ($1.4$--$3.0$)\\
 & $4{\times}10^{-3}$ & $0.95$ ($0.78$--$1.16$) & $2.0$ ($1.6$--$2.5$)\\
\midrule
$[[216,16,9]]$ & $4{\times}10^{-3}$ & $0.69$ ($0.34$--$1.40$) & $1.6$ ($0.69$--$3.8$)\\
 & $5{\times}10^{-3}$ & $0.82$ ($0.67$--$1.01$) & $1.1$ ($0.80$--$1.5$)\\
\midrule
$[[240,18,10]]$ & $3{\times}10^{-3}$ & $1.9$ ($1.5$--$2.5$) & $3.1$ ($2.3$--$4.2$)\\
 & $4{\times}10^{-3}$ & $1.3$ ($1.2$--$1.4$) & $1.5$ ($1.4$--$1.6$)\\
\bottomrule
\end{tabular}
\end{table}

\paragraph{Statistics.} Failure probabilities are binomial proportions with $95\%$ Wilson score
intervals. Derived quantities $1-(1-p)^j$ are monotone in $p$, and their intervals are the images
of the Wilson endpoints. Ratios of failure probabilities use the delta method on the measured
counts.

\section{Code search}\label{app:search}

\paragraph{Search evidence.} The six rows of Table~\ref{tab:family} that are not in
Table~\ref{tab:reeval} are the rigid gain-two candidates of highest distance in the groups we sent
to exact verification ($S_3\times\ZZ_7$, $S_3\times\ZZ_9$, $S_3\times\ZZ_{14}$, $D_4\times\ZZ_5$,
$D_4\times\ZZ_6$, $D_4\times\ZZ_7$ and $\ZZ_9\rtimes_8\ZZ_{12}$), except $S_3\times\ZZ_{14}$, whose
candidate has $\gamma=1$ at minimum weight (below). The $\ZZ_9\rtimes_8\ZZ_{12}$ candidate is a
second $[[216,12,12]]$ code, not equivalent to the first under any block-affine map. The search
covers non-Abelian groups of order $24$ to $150$ and Mitten-type semidirect products
$\ZZ_m\rtimes\ZZ_n$ up to order about $220$, with $a$ a union of conjugacy classes and $b$ any
element, both of weight three or four, and with $E$, $\mu_E$ and the baseline computed exactly. We
stopped it after $30$ of $295$ queued groups, having evaluated $25{,}465$ codes, of which
$22{,}962$ are rigid. Of these codes, $254$ have $\gamma=2$ for at least one examined
minimum-weight logical, $204$ of them rigid, and none has $\gamma>2$. Rigid gain-two codes, mostly
with $k=4$, exist over $S_3\times\ZZ_m$ for every $m$ from $5$ to $20$ except $17$, over
$D_4\times\ZZ_m$ for $m=3,\dots,7$ and over $\ZZ_9\rtimes_r\ZZ_{12}$ for $r=2,8$
(Table~\ref{tab:search}), with estimated distance growing to $20$ for $[[228,8,20]]$ over
$S_3\times\ZZ_{19}$. These distances are information-set upper bounds and the gains refer to
logicals found by a short estimate.

\paragraph{Dependence on the chosen logical.} For $[[80,8,10]]$ one of the $12$ orbits of
minimum-weight logicals examined, out of $18$, gives $\gamma=2$ and the others give $1$ to $1.5$.
For $[[96,8,12]]$ it is one of $12$ examined out of $39$. For $[[108,4,13]]$ the three orbits give
$2$, $3/2$ and $1$, and for each of the two $[[216,12,12]]$ codes the two orbits give $2$ and $1$.
Among the gain-three codes, two of the $13$ orbits of $[[240,18,10]]$ over $S_4\times\ZZ_5$ give
$\gamma=3$ and the others give $3/2$. In $[[240,24,10]]$ and $[[240,18,10]]$ over
$S_4\times\ZZ_5$ and $[[288,28,6]]$ over $A_4\times\ZZ_{12}$ another orbit lets an Abelian group
measure six operators, so their code-level gain is two.

\paragraph{Rigid search.} For each group of the queue, $a$ runs over the unions of conjugacy
classes of weight three or four up to $\Aut(G)$ and multiplication by central elements. By the
equivalence $\Code(a,b)\cong\Code(b^*,a^*)$ this also covers central $b$. The element $b$ runs
over the elements of weight three or four up to two-sided translations and the automorphisms
fixing $\supp a$, exhaustively when fewer than $40{,}000$ pairs exist and otherwise by
$15{,}000$ sampled pairs. Codes with $k\ge4$ and estimated distance at least six pass to the
second stage. This stage keeps up to $1000$ codes per group by estimated distance, replaces $b$ by
the translate with the largest centraliser, computes $E$ exhaustively, and evaluates $\mu_E$ and
the baseline exactly for up to $12$ orbits of minimum-weight logicals. All $254$ codes with
$\gamma=2$ were evaluated exactly. The other values that occur are $3/2$ ($757$ codes), $4/3$
($205$), $5/3$ ($34$), $5/4$ ($24$) and smaller, and $24{,}177$ codes have $\gamma=1$. Table~\ref{tab:search}
lists the results per group.

\begin{table}[!t]
\centering\small
\caption{Rigid gain-two codes found by the search, per group ($30$ of $295$ queued groups reached,
some partially). Codes: codes with the gain computed. Rigid $\gamma{=}2$: rigid codes with
$\gamma=2$ for some minimum-weight logical examined. All orbits: those with $\gamma=2$ for every
orbit examined. Non-rigid $\gamma{=}2$: non-rigid codes with $\gamma=2$ over $E$. Last column:
rigid gain-two code of largest estimated (information-set) distance. $^\dagger$$\gamma=2$ only for
a weight-$16$ logical, $\gamma=1$ at weight $14$. $S_3\times\ZZ_4$ and $\ZZ_9\rtimes_4\ZZ_{12}$
have $3$ and $2$ non-rigid gain-two codes, so the non-rigid column totals $50$. No rigid gain-two
code was found over $S_3\times\ZZ_4$, $S_3\times\ZZ_{17}$,
$S_3\times\ZZ_{21},\dots,S_3\times\ZZ_{25}$ and $\ZZ_9\rtimes_4\ZZ_{12}$.}
\label{tab:search}
\begin{tabular}{lrrrrl}
\toprule
$G$ & codes & rigid $\gamma{=}2$ & all orbits & non-rigid $\gamma{=}2$ & best rigid $\gamma{=}2$\\
\midrule
$S_3\times\ZZ_5$ & 299 & 3 & 1 & 0 & $[[60,4,8]]$\\
$S_3\times\ZZ_6$ & 1804 & 3 & 0 & 2 & $[[72,4,10]]$\\
$S_3\times\ZZ_7$ & 1000 & 17 & 7 & 0 & $[[84,4,11]]$\\
$S_3\times\ZZ_8$ & 1000 & 2 & 0 & 11 & $[[96,4,11]]$\\
$S_3\times\ZZ_9$ & 1000 & 44 & 19 & 0 & $[[108,4,13]]$\\
$S_3\times\ZZ_{10}$ & 1000 & 4 & 0 & 5 & $[[120,4,13]]$\\
$S_3\times\ZZ_{11}$ & 1000 & 16 & 5 & 0 & $[[132,4,15]]$\\
$S_3\times\ZZ_{12}$ & 1997 & 5 & 4 & 6 & $[[144,4,14]]$\\
$S_3\times\ZZ_{13}$ & 1000 & 11 & 7 & 0 & $[[156,4,16]]$\\
$S_3\times\ZZ_{14}$ & 999 & 10 & 7 & 3 & $[[168,16,14]]^\dagger$\\
$S_3\times\ZZ_{15}$ & 1000 & 9 & 7 & 0 & $[[180,4,16]]$\\
$S_3\times\ZZ_{16}$ & 990 & 5 & 3 & 4 & $[[192,8,16]]$\\
$S_3\times\ZZ_{18}$ & 1000 & 4 & 3 & 1 & $[[216,6,18]]$\\
$S_3\times\ZZ_{19}$ & 1000 & 3 & 2 & 0 & $[[228,8,20]]$\\
$S_3\times\ZZ_{20}$ & 1000 & 2 & 0 & 0 & $[[240,8,16]]$\\
$D_4\times\ZZ_3$ & 579 & 7 & 0 & 0 & $[[48,9,6]]$\\
$D_4\times\ZZ_4$ & 170 & 5 & 0 & 1 & $[[64,10,8]]$\\
$D_4\times\ZZ_5$ & 600 & 30 & 0 & 0 & $[[80,8,10]]$\\
$D_4\times\ZZ_6$ & 579 & 8 & 2 & 1 & $[[96,8,12]]$\\
$D_4\times\ZZ_7$ & 600 & 8 & 4 & 0 & $[[112,4,13]]$\\
$\ZZ_9\rtimes_2\ZZ_{12}$ & 600 & 4 & 1 & 9 & $[[216,4,16]]$\\
$\ZZ_9\rtimes_8\ZZ_{12}$ & 326 & 4 & 3 & 2 & $[[216,12,12]]$\\
\bottomrule
\end{tabular}
\end{table}

\paragraph{The $[[168,16,14]]$ candidate.} The $S_3\times\ZZ_{14}$ candidate is a rigid
$[[168,16,14]]$ code whose search record has $\gamma=2$, for a logical of weight $16$ taken from a
short distance estimate before the estimate dropped to $14$. At weight $14$, $20{,}000$
information-set trials find $84$ distinct $Z$-logicals, all in one translation orbit, with
$\mu=4$ equal to the baseline, so $\gamma=1$ at minimum weight. The exact enumeration shows that
no other weight-$14$ $Z$-logical has at most six qubits in either block, leaving only the $(7,7)$
split open.

\paragraph{Gain-three verification.} The $56$ groups of Section~\ref{sec:gain3} are direct
products of $A_4$, $S_4$, $\mathrm{SL}(2,3)$, the Frobenius groups of orders $20$ and $21$, the
Heisenberg group of order $27$ and metacyclic groups with cyclic groups. The two searches below
gave $16$ distinct codes with $\gamma\ge3$, one of them found by both. For each we checked
rigidity by the brute-force search over block-affine maps and computed, by exact branch and
bound, $\mu_E$ and $\mu_H$ for every maximal Abelian subgroup $H$ and every orbit of
minimum-weight logicals found by $3000$ information-set trials per type. We also computed the
distance and the Cheeger constants of the ancilla graphs exactly, and checked
Proposition~\ref{thm:cover} for every subgroup and orbit. All $16$ codes pass.

\paragraph{Rate.} In the Mitten construction with a $1\times2$ base matrix $[a_0\ a_1]$ of
weight-three entries, a central entry (a union of conjugacy classes, for instance) makes
$(a_1,a_0)$ a codeword and hence a $Z$-logical of weight six, which caps the distance at six. The
automorphism symmetry of an actual Mitten code is therefore the intersection of the centralisers
of its base entries, generically the centre of $G$. Two-block codes over the same groups reach
distance nine to twelve with a conjugation-invariant support.

\paragraph{Cover-guided search.} Over the $56$ groups of Section~\ref{sec:gain3}, with $a$
central, the first search evaluated $46{,}400$ pairs $(a,b)$ in $20{,}775$ classes. For the $93$
most promising codes we computed the block-affine group and the exact gain, and $10$ have
$\gamma\ge3$. All are rigid, lie over $S_4\times\ZZ_3$, $A_4\times\ZZ_9$, $S_4\times\ZZ_5$,
$A_4\times\ZZ_{12}$ and $\mathrm{SL}(2,3)\times\ZZ_6$, and have distances from $6$ to $10$. The
denser search covered $S_4\times\ZZ_m$ and $\mathrm{SL}(2,3)\times\ZZ_m$ for $m=5,\dots,8$ and
$A_4\times\ZZ_m$ for $m=9,\dots,15$, keeping codes with $k\ge6$ and estimated $d\ge10$. Of the
$10{,}495$ candidates we evaluated $26$ exactly, and $7$ have $\gamma=3$, with distances $10$
and $12$.

\section{Full code tables}\label{app:tables}

Tables~\ref{tab:family_full} and~\ref{tab:gain3_full} repeat Tables~\ref{tab:family}
and~\ref{tab:gain3} with the merged codes of the full gadgets. Every ancilla graph has Cheeger
constant at least one, so by Lemma~\ref{lem:merged} every merged code keeps both distances, with
check weights up to $15$.

\begin{table}[!t]
\centering\small
\caption{Full version of Table~\ref{tab:family}. Merged: merged code of the full gadget measuring
the $\mu$ operators, with its maximum check weight and the smallest Cheeger constant $h$ of its
ancilla graphs. Other columns and notes as in Table~\ref{tab:family}.}
\label{tab:family_full}
\setlength{\tabcolsep}{3.5pt}
\resizebox{\ifdim\width>\textwidth\textwidth\else\width\fi}{!}{%
\begin{tabular}{llclrrcccllcc}
\toprule
$G$ & $[[n,k,d]]$ & $w$ & method & $|E|$ & $|Ev|$ & $\mu$ & base ($|H|$) & $\gamma$ & orbits & merged & max $w$ & min $h$\\
\midrule
$S_3\times\ZZ_6$ & $[[72,8,6]]$ & 6 & I & 36 & 36 & 8 & 4 (12) & 2 & n/a & $[[168,0]]$ & 11 & 2\\
$D_5\times\ZZ_3$ & $[[60,12,6]]$ & 8 & I, BS & 30 & 30 & 4 & 2 (6) & 2 & n/a & $[[116,8]]$ & 14 & 2\\
$D_4\times\ZZ_5$ & $[[80,8,10]]$ & 8 & I, BS & 40 & 40 & 4 & 2 (20) & 2 & 1 of 12 (18) & $[[160,4]]$ & 14 & 6/5\\
$S_3\times\ZZ_7$ & $[[84,4,11]]$ & 7 & I, BS & 42 & 42 & 4 & 2 (14) & 2 & 1 of 1 & $[[172,0]]$ & 14 & 6/5\\
$D_4\times\ZZ_6$ & $[[96,8,12]]$ & 8 & I, BS & 48 & 48 & 4 & 2 (24) & 2 & 1 of 12 (39) & $[[192,4]]$ & 14 & 1\\
$S_3\times\ZZ_9$ & $[[108,4,13]]$ & 7 & BS [I: $\ge9$] & 54 & 54 & 4 & 2 (18) & 2 & 1 of 3 & $[[212,0]]$ & 13 & 4/3\\
$D_4\times\ZZ_7$ & $[[112,4,13]]$ & 8 & BS [I: $\ge7$] & 56 & 56 & 4 & 2 (28) & 2 & 1 of 1 & $[[216,0]]$ & 15 & 4/3\\
$\ZZ_9\rtimes_8\ZZ_{12}$ & $[[216,16,9]]$ & 7 & I, BS & 108 & 36 & 8 & 4 (12) & 2 & n/a & $[[360,8]]$ & 13 & 3/2\\
$\ZZ_9\rtimes_2\ZZ_{12}$ & $[[216,16,9]]^\ast$ & 8 & I & 648 & 12 & 8 & 4 (36) & 2 & 1 of 1 & $[[360,8]]$ & 12 & 3/2\\
$\ZZ_9\rtimes_8\ZZ_{12}$ & $[[216,12,12]]$ & 7 & BS & 108 & 18 & 6 & 3 (54) & 2 & 1 of 2 & $[[360,6]]$ & 10 & 1\\
$\ZZ_9\rtimes_8\ZZ_{12}$ & $[[216,12,12]]$ & 7 & BS [I: $\ge6$] & 108 & 18 & 6 & 3 (54) & 2 & 1 of 2 & $[[360,6]]$ & 10 & 1\\
\bottomrule
\end{tabular}}
\end{table}

\begin{table}[!t]
\centering\small
\caption{Full version of Table~\ref{tab:gain3}, with the merged-code columns defined as in
Table~\ref{tab:family_full}. Other columns and notes as in Table~\ref{tab:gain3}.}
\label{tab:gain3_full}
\setlength{\tabcolsep}{3.5pt}
\resizebox{\ifdim\width>\textwidth\textwidth\else\width\fi}{!}{%
\begin{tabular}{llclrccccccclcc}
\toprule
$G$ & $[[n,k,d]]$ & $w$ & method & $|E|$ & $[E:H_{\max}]$ & $c$ & orbits & $\mu$ & base & $\gamma$ & $\gamma_{\mathrm{code}}$ & merged & max $w$ & min $h$\\
\midrule
$S_4\times\ZZ_7$ & $[[336,26,12]]$ & 8 & S & 168 & 6 & 3 & 2 of 5 & 12 & 4 & 3 & 3 & $[[624,14]]$ & 15 & 1\\
$S_4\times\ZZ_7$ & $[[336,24,12]]$ & 8 & S & 168 & 6 & 3 & 2 of 3 & 12 & 4 & 3 & 3 & $[[624,12]]$ & 15 & 1\\
$S_4\times\ZZ_7$ & $[[336,24,10]]$ & 8 & S & 168 & 6 & 3 & 2 of 2 & 12 & 4 & 3 & 3 & $[[576,12]]$ & 13 & 1\\
$S_4\times\ZZ_6$ & $[[288,24,10]]$ & 8 & S & 288 & 6 & 3 & 1 of 1 & 12 & 4 & 3 & 3 & $[[528,12]]$ & 14 & 1\\
$S_4\times\ZZ_5$ & $[[240,24,10]]$ & 8 & S & 240 & 6 & 3 & 2 of 16 & 12 & 4 & 3 & 2 & $[[480,12]]$ & 14 & 1\\
$S_4\times\ZZ_5$ & $[[240,18,10]]$ & 8 & BS & 240 & 6 & 3 & 2 of 13 & 12 & 4 & 3 & 2 & $[[480,6]]$ & 14 & 1\\
$\mathrm{SL}(2,3)\times\ZZ_5$ & $[[240,18,10]]$ & 8 & BS, S & 120 & 4 & 4 & 2 of 5 & 9 & 3 & 3 & 3 & $[[420,9]]$ & 13 & 1\\
$S_4\times\ZZ_5$ & $[[240,24,8]]$ & 8 & BS, M & 120 & 6 & 3 & 1 of 1 & 12 & 4 & 3 & 3 & $[[432,12]]$ & 14 & 3/2\\
$S_4\times\ZZ_5$ & $[[240,24,6]]$ & 8 & BS, M & 120 & 6 & 3 & 1 of 1 & 12 & 4 & 3 & 3 & $[[384,12]]$ & 12 & 2\\
$\mathrm{SL}(2,3)\times\ZZ_6$ & $[[288,20,6]]$ & 7 & M & 144 & 4 & 4 & 1 of 1 & 10 & 3 & $\mathbf{10/3}$ & 10/3 & $[[408,10]]$ & 9 & 2\\
$A_4\times\ZZ_{12}$ & $[[288,28,6]]^a$ & 7 & M & 144 & 3 & 3 & 1 of 2 & 12 & 4 & 3 & 2 & $[[480,16]]$ & 9 & 3/2\\
$A_4\times\ZZ_9$ & $[[216,24,6]]$ & 7 & M & 108 & 3 & 3 & 1 of 1 & 12 & 4 & 3 & 3 & $[[360,12]]$ & 9 & 2\\
$A_4\times\ZZ_9$ & $[[216,24,6]]$ & 7 & M & 108 & 3 & 3 & 1 of 1 & 12 & 4 & 3 & 3 & $[[360,12]]$ & 9 & 2\\
$S_4\times\ZZ_3$ & $[[144,24,6]]$ & 8 & M & 144 & 6 & 3 & 1 of 4 & 12 & 4 & 3 & 3 & $[[288,12]]$ & 13 & 2\\
$S_4\times\ZZ_3$ & $[[144,24,6]]$ & 8 & M & 72 & 6 & 3 & 2 of 5 & 12 & 4 & 3 & 3 & $[[288,12]]$ & 13 & 2\\
$S_4\times\ZZ_3$ & $[[144,24,6]]$ & 8 & M & 72 & 6 & 3 & 2 of 5 & 12 & 4 & 3 & 3 & $[[288,12]]$ & 13 & 2\\
\bottomrule
\end{tabular}}
\end{table}

\FloatBarrier
\section{Supports and element indices}\label{app:supports}

With the notation of Section~\ref{sec:notation}, the codes of Table~\ref{tab:family} are:

{\small\setlength{\parindent}{0pt}
$[[72,8,6]]$ over $S_3\times\ZZ_6$: $a=\{(e,y^4),(r,e),(r^2,e)\}$, $b=\{(r,y^4),(s_1,e),(s_0,y^4)\}$.\\
$[[60,12,6]]$ over $D_5\times\ZZ_3$: $a=\{(e,e),(e,y),(r,e),(r^4,e)\}$, $b=\{(r,e),(s_2,y^2),(s_3,e),(r^4,y)\}$.\\
$[[80,8,10]]$ over $D_4\times\ZZ_5$: $a=\{(e,e),(r,y),(r^2,y^2),(r^3,y)\}$, $b=\{(e,e),(r,y),(s_0,y),(s_3,y^4)\}$.\\
$[[84,4,11]]$ over $S_3\times\ZZ_7$: $a=\{(e,e),(r,y),(r^2,y)\}$, $b=\{(e,e),(e,y^2),(r,y^3),(s_1,y^6)\}$.\\
$[[96,8,12]]$ over $D_4\times\ZZ_6$: $a=\{(e,e),(e,y),(r,e),(r^3,e)\}$, $b=\{(e,e),(r,y^2),(s_0,e),(r^2,y^4)\}$.\\
$[[108,4,13]]$ over $S_3\times\ZZ_9$: $a=\{(e,e),(r,y),(r^2,y)\}$, $b=\{(e,e),(e,y),(r,y^5),(s_1,y^2)\}$.\\
$[[112,4,13]]$ over $D_4\times\ZZ_7$: $a=\{(e,e),(e,y),(s_0,y^2),(s_2,y^2)\}$, $b=\{(e,e),(r,y^2),(s_1,y),(s_3,y^4)\}$.\\
$[[216,16,9]]$ over $\ZZ_9\rtimes_8\ZZ_{12}$: $a=\{(e,y^2),(x^4,y^6),(x^5,y^6)\}$, $b=\{(x,y^{11}),(x^4,y^2),(x^4,y^{11}),(x^7,y^{10})\}$.\\
$[[216,16,9]]$ over $\ZZ_9\rtimes_2\ZZ_{12}$: $a=\{(e,y^2),(e,y^6),(x^3,y^2),(x^6,y^2)\}$, $b=\{(x,y^{11}),(x^7,y^5),(x^8,y),(x^8,y^3)\}$.\\
$[[216,12,12]]$ (first), same group: $a=\{(e,y^2),(x^3,y^4),(x^6,y^4)\}$, $b=\{(e,y^5),(x^6,e),(x^6,y^{10}),(x^7,y^2)\}$.\\
$[[216,12,12]]$ (search), same group: $a=\{(e,e),(x^3,y^2),(x^6,y^2)\}$, $b=\{(e,e),(e,y),(e,y^3),(x,y^9)\}$.\par}

\medskip
Also mentioned in the text:

{\small\setlength{\parindent}{0pt}
$[[168,16,14]]$ over $S_3\times\ZZ_{14}$: $a=\{(e,e),(e,y),(r,y^3),(r^2,y^3)\}$, $b=\{(e,e),(e,y^3),(s_1,y^2),(s_2,y^{10})\}$.\\
$[[228,8,{\le}20]]$ over $S_3\times\ZZ_{19}$: $a=\{(e,e),(e,y),(e,y^2),(e,y^9)\}$, $b=\{(e,e),(r,y^6),(s_1,y^4),(s_2,y^{14})\}$.\par}

\medskip
Element indices for every code are in the repository.

\end{document}